\documentclass[journal]{./templates/IEEEtran}

\newcommand{\pdftitle}{
SAOITHE: Sustainable Age-of-Information-Based Timely Status Updating for Hardware-constrained Edge networks}

\usepackage[nolist,nohyperlinks]{acronym}
\usepackage{amsmath}
\usepackage{breqn}
\usepackage{balance}
\usepackage{cite}
\usepackage{framed}
\usepackage{siunitx}
\usepackage{soul}
\usepackage{multirow}
\usepackage{etoolbox}
\usepackage{algorithm}
\usepackage{algorithmic}

\usepackage{blindtext}

\usepackage{standalone}

\usepackage{tikz}
\usepackage{pgfplots}
\usetikzlibrary{spy}
\usepgfplotslibrary{groupplots}
\usetikzlibrary{backgrounds}
\usetikzlibrary{patterns}
\usepgfplotslibrary{fillbetween}
\pgfplotsset{compat=1.18}

\definecolor{NZgreen}{RGB}{44,160,44}
\definecolor{SIblue}{RGB}{31,119,180}
\definecolor{JPred}{RGB}{214,39,40}
\definecolor{AoIBlue}{RGB}{31,119,180}
\definecolor{AoIRed}{RGB}{214,39,40}
\definecolor{AoIOrange}{RGB}{255,127,14}

\usepackage{lipsum} 
\usepackage{graphicx}
\usepackage{subcaption}
\usepackage{color}
\usepackage[table,xcdraw]{xcolor}

\graphicspath{{./images/}}

\usepackage{amsthm}
\usepackage{amsfonts}
\newtheorem{proposition}{Proposition}

\newtheorem{corollary}{Corollary}
\usepackage{pifont}

\usepackage{url}
\usepackage[bookmarks=false]{hyperref}
\hypersetup{
  backref,
  hidelinks,
  colorlinks=false,
  breaklinks=true,
  bookmarksopen=false,
  pdftitle=\pdftitle,
  pdfauthor={Author}
}

\newtoggle{authorcomment}
\toggletrue{authorcomment}

\begin{document}
\bstctlcite{IEEEexample:BSTcontrol}

\title{\pdftitle}

\newcommand\copyrighttext{%
  \footnotesize \textcopyright This work has been submitted to the IEEE for possible publication. Copyright may be transferred without notice, after which this version may no longer be accessible.
  }
\newcommand\copyrightnotice{%
\begin{tikzpicture}[remember picture,overlay]
\node[anchor=north,yshift=-5pt] at (current page.north) {\fbox{\parbox{\dimexpr\textwidth-\fboxsep-\fboxrule\relax}{\copyrighttext}}};
\end{tikzpicture}%
}

\author{
    \IEEEauthorblockN{Shih-Kai Chou,
                      Maice Costa,
                      Mihael Mohor\v{c}i\v{c}, and
                      Jernej Hribar}
\thanks{
This work was performed while S. -K. Chou was at  Jožef Stefan Institute. S. -K. Chou is currently with Ericsson AB, Kista, Sweden. The views expressed in this paper are those of the author and do not necessarily reflect the views of Ericsson AB.\\
M. Costa is with Teledyne, USA. This work is unrelated to her current affiliation and does not reflect the views of Teledyne Technologies. \\
M. Mohor\v{c}i\v{c} and J. Hribar are with Jožef Stefan Institute, Ljubljana, Slovenia \\
The corresponding author is Jernej Hribar (jernej.hribar@ijs.si). \\
This work was supported by the Slovenian Research Agency under grants P2-0016, MN-0009, and J2-50071.}
}

\maketitle

\begin{acronym}[MACHU]
  \acro{iot}[IoT]{Internet of Things}
  \acro{iiot}[IIoT]{Industrial Internet of Things}
  \acro{cr}[CR]{Cognitive Radio}
  \acro{ofdm}[OFDM]{orthogonal frequency-division multiplexing}
  \acro{ofdma}[OFDMA]{orthogonal frequency-division multiple access}
  \acro{scfdma}[SC-FDMA]{single carrier frequency division multiple access}
  \acro{rbi}[RBI]{ Research Brazil Ireland}
  \acro{rfic}[RFIC]{radio frequency integrated circuit}
  \acro{sdr}[SDR]{Software Defined Radio}
  \acro{sdn}[SDN]{Software Defined Networking}
  \acro{su}[SU]{Secondary User}
  \acro{ra}[RA]{Resource Allocation}
  \acro{qos}[QoS]{quality of service}
  \acro{usrp}[USRP]{Universal Software Radio Peripheral}
  \acro{mno}[MNO]{Mobile Network Operator}
  \acro{mnos}[MNOs]{Mobile Network Operators}
  \acro{gsm}[GSM]{Global System for Mobile communications}
  \acro{tdma}[TDMA]{Time-Division Multiple Access}
  \acro{fdma}[FDMA]{Frequency-Division Multiple Access}
  \acro{gprs}[GPRS]{General Packet Radio Service}
  \acro{msc}[MSC]{Mobile Switching Centre}
  \acro{bsc}[BSC]{Base Station Controller}
  \acro{umts}[UMTS]{universal mobile telecommunications system}
  \acro{Wcdma}[WCDMA]{Wide-band code division multiple access}
  \acro{wcdma}[WCDMA]{wide-band code division multiple access}
  \acro{cdma}[CDMA]{code division multiple access}
  \acro{lte}[LTE]{Long Term Evolution}
  \acro{papr}[PAPR]{peak-to-average power rating}
  \acro{hn}[HetNet]{heterogeneous networks}
  \acro{phy}[PHY]{physical layer}
  \acro{mac}[MAC]{medium access control}
  \acro{amc}[AMC]{adaptive modulation and coding}
  \acro{mimo}[MIMO]{multiple input multiple output}
  \acro{rats}[RATs]{radio access technologies}
  \acro{vni}[VNI]{visual networking index}
  \acro{rbs}[RB]{resource blocks}
  \acro{rb}[RB]{resource block}
  \acro{ue}[UE]{user equipment}
  \acro{cqi}[CQI]{Channel Quality Indicator}
  \acro{hd}[HD]{half-duplex}
  \acro{fd}[FD]{full-duplex}
  \acro{sic}[SIC]{self-interference cancellation}
  \acro{si}[SI]{self-interference}
  \acro{bs}[BS]{base station}
  \acro{fbmc}[FBMC]{Filter Bank Multi-Carrier}
  \acro{ufmc}[UFMC]{Universal Filtered Multi-Carrier}
  \acro{scm}[SCM]{Single Carrier Modulation}
  \acro{isi}[ISI]{inter-symbol interference}
  \acro{ftn}[FTN]{Faster-Than-Nyquist}
  \acro{m2m}[M2M]{machine-to-machine}
  \acro{mtc}[MTC]{machine type communication}
  \acro{mmw}[mmWave]{millimeter wave}
  \acro{bf}[BF]{beamforming}
  \acro{los}[LOS]{line-of-sight}
  \acro{nlos}[NLOS]{non line-of-sight}
  \acro{capex}[CAPEX]{capital expenditure}
  \acro{opex}[OPEX]{operational expenditure}
  \acro{ict}[ICT]{Information and Communications Technology}
  \acro{sp}[SP]{service providers}
  \acro{inp}[InP]{infrastructure providers}
  \acro{mvnp}[MVNP]{mobile virtual network provider}
  \acro{mvno}[MVNO]{mobile virtual network operator}
  \acro{nfv}[NFV]{network function virtualization}
  \acro{vnfs}[VNF]{virtual network functions}
  \acro{cran}[C-RAN]{Cloud Radio Access Network}
  \acro{bbu}[BBU]{baseband unit}
  \acro{bbus}[BBU]{baseband units}
  \acro{rrh}[RRH]{remote radio head}
  \acro{rrhs}[RRH]{Remote radio heads} 
  \acro{sfv}[SFV]{sensor function virtualization}
  \acro{wsn}[WSN]{wireless sensor networks} 
  \acro{bio}[BIO]{Bristol is open}
  \acro{vitro}[VITRO]{Virtualized dIstributed plaTfoRms of smart Objects}
  \acro{os}[OS]{operating system}
  \acro{www}[WWW]{world wide web}
  \acro{iotvn}[IoT-VN]{IoT virtual network}
  \acro{mems}[MEMS]{micro electro mechanical system}
  \acro{mec}[MEC]{Mobile edge computing}
  \acro{coap}[CoAP]{Constrained Application Protocol}
  \acro{vsn}[VSN]{Virtual sensor network}
  \acro{rest}[REST]{REpresentational State Transfer}
  \acro{aoi}[AoI]{Age of Information}
  \acro{lora}[LoRa\texttrademark]{Long Range}
  \acro{iot}[IoT]{Internet of Things}
  \acro{snr}[SNR]{Signal-to-Noise Ratio}
  \acro{cps}[CPS]{Cyber-Physical System}
  \acro{uav}[UAV]{Unmanned Aerial Vehicle}
  \acro{rfid}[RFID]{Radio-frequency identification}
  \acro{lpwan}[LPWAN]{Low-Power Wide-Area Network}
  \acro{lgfs}[LGFS]{Last Generated First Served}
  \acro{wsn}[WSN]{wireless sensor network} 
  \acro{lmmse}[LMMSE]{Linear Minimum Mean Square Error}
  \acro{rl}[RL]{Reinforcement Learning}
  \acro{nb-iot}[NB-IoT]{Narrowband IoT}
  \acro{lorawan}[LoRaWAN]{Long Range Wide Area Network}
  \acro{mdp}[MDP]{Markov Decision Process}
  \acro{ann}[ANN]{Artificial Neural Network}
  \acro{dqn}[DQN]{Deep Q-Network}
  \acro{mse}[MSE]{Mean Square Error}
  \acro{ml}[ML]{Machine Learning}
  \acro{cpu}[CPU]{Central Processing Unit}
  \acro{ddpg}[DDPG]{Deep Deterministic Policy Gradient}
  \acro{ai}[AI]{Artificial Intelligence}
  \acro{gp}[GP]{Gaussian Processes}
  \acro{drl}[DRL]{Deep Reinforcement Learning}
  \acro{mmse}[MMSE]{Minimum Mean Square Error}
  \acro{fnn}[FNN]{Feedforward Neural Network}
  \acro{eh}[EH]{Energy Harvesting}
  \acro{wpt}[WPT]{Wireless Power Transfer}
  \acro{dl}[DL]{Deep Learning}
  \acro{yolo}[YOLO]{You Only Look Once}
  \acro{mec}[MEC]{Mobile Edge Computing}
  \acro{marl}[MARL]{Multi-Agent Reinforcement Learning}
  \acro{aoi}[AoI]{Age of Information}
  \acro{cf}[CF]{Carbon Footprint}
  \acro{ci}[CI]{Carbon Intensity}
  \acro{fcfs}[FCFS]{First Come First Served}
  \acro{lcfs}[LCFS]{Last Come First Served}
  \acro{qos}[QoS]{Quality of Service}
  \acro{snr}[SNR]{Signal-to-noise Ratio}
  \acro{saoi}[SAoI]{Sustainable Age of Information}
  \acro{aqi}[AQI]{age and quality of Information}
  \acro{voi}[VoI]{Value of Information}
  \acro{qaoi}[QAoI]{Query Age of Information}
  \acro{aoii}[AoII]{Age of Incorrect Information}
  \acro{wsn}[WSN]{Wireless Sensor Network}
  \acro{cwsaoi}[CWSAoI]{Cumulative Weighted Sum Age of Information}
  \acro{iwsaoi}[IWSAoI]{Instantaneous Weighted Sum Age of Information}
  \acro{mtu}[MTU]{Maximum Transmission Unit}
  \acro{cad}[CAD]{Channel Activity Detection}
  \acro{saoithe}[SAOITHE]{Sustainable Age-of-Information-Based Timely Status Updating for Hardware-constrained Edge networks}
  \acro{dp}[DP]{Dynamic Programming}
  \acro{rmab}[RMAB]{Restless Multi-Armed Bandit problem}
  \acro{flops}[FLOPs]{Floating-Point Operations Per Second}
  \acro{isac}[ISAC]{Integrated Sensing and Communication}
  \acro{cav}[CAV]{Connected Autonomous Vehicles}
  \acro{swipt}[SWIPT]{Simultaneous Wireless Information and Power Transfer}
\end{acronym}

\begin{abstract}



In future large-scale deployments of 6G and beyond networks, collecting timely information, as measured by the Age of Information (AoI) metric, is becoming increasingly important. At the same time, the environmental impact, often characterized by the resulting Carbon Footprint (CF), depends on both the amount of consumed energy and the Carbon Intensity (CI), i.e., the amount of CO$_2$-equivalent emissions produced per unit of consumed energy. Since CI varies over time, minimizing energy is not equivalent to minimizing CF, as a status update with the same energy demand may result in a different carbon cost depending on when it is transmitted. This makes timely status updating a nontrivial scheduling problem. To address this challenge, we formulate carbon-aware status updating as a constrained Markov Decision Process (MDP) that minimizes a quadratic AoI penalty subject to CF budget, transmission duty-cycle, and channel-capacity constraints. We then propose Sustainable Age-of-Information-Based Timely Status Updating for Hardware-constrained Edge networks (SAOITHE), a Whittle-index-based scheduling solution that enables scalable real-time scheduling. Using real-world CI traces across low-, medium-, and high-CI regions, the results show that SAOITHE remains within the allocated CF budget while achieving lower AoI than baseline policies. Moreover, the gains are around $25\%$ and $20\%$ in low- and medium-CI regions, respectively, and up to $75\%$ in high-CI settings, while preserving scalability. 

\end{abstract}

\acresetall


\begin{IEEEkeywords}
Age of Information, Sustainability, Carbon Footprint
\end{IEEEkeywords}

\section{Introduction}
\label{sec:intro}
With the rise of data-driven and latency-sensitive applications, such as \ac{cav}, \ac{iiot}, remote healthcare, and \ac{isac}, transmitting fresh and up-to-date information is critical~\cite{akyildiz20206g}. For such systems, the \ac{aoi} metric~\cite{kaul2012real} has been adopted not only to measure the timeliness of collected information, but also to support the design of scheduling policies for real-time systems~\cite{yates2021age}. Scheduling data collection can have a crucial impact on how sustainable, i.e., environmentally friendly, future networks can become, as the \ac{cf} depends not only on the amount of energy required to collect information, but also on how the consumed energy was generated~\cite{yang2025toward, chou2025energy}. The greater the share of renewable energy sources used for electricity generation, the lower the \ac{cf} of the consumed energy. 
Therefore, when considering the sustainability of 6G and beyond networks, where international standardization bodies have explicitly identified environmental sustainability as a core Key Performance Indicator (KPI) \cite{itu2023imt2030}, scheduling, i.e., deciding when information is transmitted, captured, and processed, becomes crucial.


Typically, the \ac{cf} of an \ac{ict} system is estimated by associating the energy consumed for data transmission with the corresponding \ac{ci} per unit of energy~\cite{CFreport}. \ac{ci} is defined as the amount of CO$_2$-equivalent (CO$_2$eq) emissions produced per unit of consumed energy and varies over time and geographical region. For example, Fig.~\ref{fig:intro} shows the variation of \ac{ci} over a day for three different \ac{ci} regions using \ac{ci} traces~\cite{electricitymaps}. Consequently, the same amount of energy required to transmit information can result in a relatively low \ac{cf} if data transmission and processing occur during periods of low \ac{ci}, whereas the \ac{cf} can be three to four times higher when the \ac{ci} is high. Moreover, \ac{ci} cannot be directly controlled because it depends on the time-varying electricity generation mix and is influenced by factors such as demand, pricing, dispatch rules, and imports. Therefore, reducing the \ac{cf} of data collection requires deciding when information is generated and processed to minimize its environmental impact.

\begin{figure}
	\centering
	\large \begin{tikzpicture}
\definecolor{orange1}{RGB}{255, 245, 230} 
\definecolor{orange2}{RGB}{255, 215, 180} 
\definecolor{orange3}{RGB}{255, 165, 100} 
\definecolor{orange4}{RGB}{255, 100, 0}   

\definecolor{NZgreen}{RGB}{44,160,44}
\definecolor{SIblue}{RGB}{31,119,180}
\definecolor{JPred}{RGB}{214,39,40}

\large
\begin{axis}[
	axis lines=middle,
	xlabel= $t(h)$, 
        ylabel=CI(gCO2eq/kWh),
	grid=both,
	major grid style={gray!35},
        ytick ={100, 200, 300, 400},
        xtick ={0,2,4,6,8,10,12,14,16,18, 20, 22},
        scale ticks above exponent={2},
        ymin=0,ymax=500,
	width=1.5\linewidth,
	height=0.85\linewidth,
	legend style={at={(0.99,0.65)}},
]

\begin{scope}[on background layer]
  \fill[orange1,opacity=10] ({rel axis cs:0,0}) rectangle ({rel axis cs:1.0,0.2});
    \fill[orange2,opacity=10] ({rel axis cs:0,.2}) rectangle ({rel axis cs:1.0,0.5});
     \fill[orange3,opacity=10] ({rel axis cs:0,.5}) rectangle ({rel axis cs:1.0,0.8});
     \fill[orange4,opacity=10] ({rel axis cs:0,.8}) rectangle ({rel axis cs:1.0,1.0});
\end{scope}

\addplot[line width=2pt, color=NZgreen,  smooth, mark options={solid}]table[x=Time_Hours,y=NZ_CI_gCO2kWh,col sep=comma]{csv_data/transmission_timing_edge.csv};

\addplot[line width=2pt, color=SIblue,  smooth, mark options={solid}]table[x=Time_Hours,y=SI_CI_gCO2kWh,col sep=comma]{csv_data/transmission_timing_edge.csv};

\addplot[line width=2pt, color=JPred,  smooth, mark options={solid}]table[x=Time_Hours,y=JP_CI_gCO2kWh,col sep=comma]{csv_data/transmission_timing_edge.csv};



\addlegendentry{Region with low CI};
\addlegendentry{Region with medium CI};
\addlegendentry{Region with high CI};


\end{axis}

\end{tikzpicture}
         \caption{Daily variation in CI for regions with low, medium, and high CI over a day.}
	\label{fig:intro}
\end{figure}

For many systems, especially large-scale deployments using \acp{lpwan} such as \ac{lorawan}, the \ac{aoi} has been shown to be a suitable metric for characterizing the freshness of information~\cite{hribar2020energy}. Although many \ac{lpwan} applications are delay-tolerant, information freshness remains important in monitoring-oriented use cases such as industrial condition monitoring, fault or leak detection, asset management, and environmental sensing, where stale measurements may postpone detection or operational intervention~\cite{abbas2023aoi,bonilla2023lorawan}.
In such systems, maintaining fresh information is fundamentally a scheduling problem, since status updates cannot be transmitted arbitrarily often. This is due to additional constraints, such as duty-cycle regulations and restricted channel access at the gateway. As a result, the scheduler must decide carefully when information sources should transmit. Moreover, as the number of information sources increases, the complexity of the scheduling problem also grows, which makes it increasingly difficult to find a suitable solution.

Typically, the scheduling problem is formulated as a constrained \ac{mdp}, but the state space grows rapidly with the number of sources. To overcome this, we employ the Whittle Index, which was originally proposed to reduce the complexity of restless bandit problems~\cite{whittle1988restless}. This approach is widely adopted in \ac{aoi} literature to decompose intractable multi-source problems into manageable subproblems and prioritize transmissions based on their urgency~\cite{Hsu_AoI_earlywork,aoi_whittle_7,whittle_aoi_0}.

In short, this paper has the following contributions:
\begin{itemize}
    \item We formulate the carbon-aware status-updating problem as a constrained \ac{mdp} that minimizes a quadratic \ac{aoi} penalty subject to \ac{cf} budget, \ac{lpwan} transmission duty-cycle, and channel-capacity constraints.
    \item We propose the \ac{saoithe} framework, a Whittle-index-based scheduling solution, to solve the formulated optimization problem in a scalable manner, and further provide a complexity analysis showing that the computational complexity of the proposed solution is quasi-linear in the number of nodes,
    enabling real-time scheduling.
    \item We evaluate the proposed framework using real-world \ac{ci} traces across low-, medium-, and high-\ac{ci} regions. The results show that \ac{saoithe} remains within the allocated \ac{cf} budget while achieving a lower average \ac{aoi} compared with baseline solutions. Furthermore, we demonstrate 
    the \ac{aoi} reduction over the Round Robin baseline under the same \ac{cf} budget is approximately~$25\%$ and~$20\%$ in low- and medium-\ac{ci} regions, respectively, and increases to up to~$75\%$ in high-\ac{ci} settings, where blind schedulers incur a higher carbon cost.
\end{itemize}




Next, we discuss the related work.




\section{Related Work}
\label{sec:related}


This work is related to studies that extend the concept of \ac{aoi} by incorporating additional network conditions and application requirements~\cite{yin2019only, rajaraman2021not, VOI_ref, maatouk2020age}. Metrics such as Effective \ac{aoi}~\cite{yin2019only} extend the baseline definition by incorporating semantic and correctness aspects of received data, while others adapt \ac{aoi} to various queuing models, scheduling policies, and update mechanisms. Metrics like Age-Quality Information (AQI)\cite{rajaraman2021not} and Value of Information (VoI)\cite{VOI_ref} jointly optimize timeliness, data quality, and energy efficiency, often employing a utility-based function that also encompasses \ac{aoi}. Similarly, the Age of Incorrect Information (AoII) metric~\cite{maatouk2020age} accounts for the correctness of updates and supports the design of optimal transmission policies under unreliable channel conditions and power constraints. In contrast to the aforementioned \ac{aoi}-based metrics, our work proposes a scheduler that minimizes \ac{aoi} while accounting for sustainability through \ac{cf}, thereby considering the environmental aspects that have been largely overlooked in prior studies.

This work also builds on research that explored the role of \ac{aoi} in enhancing energy efficiency~\cite{HUANG202329,xu2020info,zhang2023TWC,Hatami2021ToC,zhang2024TOC}. For example, the authors in~\cite{HUANG202329} applied \ac{rl} and system-level strategies to jointly optimize \ac{aoi} and energy efficiency for computation offloading in \ac{iiot}, supported by queuing models. Freshness in caching-enabled \ac{iot} networks was addressed in~\cite{xu2020info}, while the approach proposed in~\cite{zhang2023TWC} leveraged \ac{mec} in a \ac{wsn} for environmental monitoring to manage energy and computational constraints. In~\cite{Hatami2021ToC}, the authors focused on energy-harvesting \ac{iot} systems with on-demand updates, while in~\cite{zhang2024TOC} they explored \ac{uav}-assisted data collection, revealing trade-offs between data freshness and energy cost. However, none of these works investigated the environmental impact and its connection to \ac{aoi}, as considered in this paper. Moreover, while minimizing energy consumption is a necessary step toward reducing the \ac{cf} of a system, it is not sufficient on its own. To overcome this gap, in this work, we propose a solution that minimizes the environmental impact of the system by also taking into account the temporal variation of \ac{ci}.

On the other hand, the Whittle index~\cite{whittle1988restless} has been widely adopted to design transmission scheduling solutions that minimize \ac{aoi} in the system~\cite{Hsu_AoI_earlywork,aoi_whittle_7,aoi_whittle_6,whittle_aoi_0,aoi_whittle_3,aoi_whittle_8}. For example, the authors in~\cite{Hsu_AoI_earlywork} applied Whittle's framework for restless bandits to design a transmission scheduler that minimizes \ac{aoi}. In~\cite{aoi_whittle_7}, the authors demonstrated that Whittle index scheduling policies result in near-optimal performance for minimizing \ac{aoi}, while the authors in~\cite{aoi_whittle_6} showed that the Whittle policy has analytically provable optimality in the many-user regime. Similarly, in~\cite{whittle_aoi_0}, the authors provided a proof of optimality for a low-complexity Whittle-index scheduling policy for multi-source status updating systems. The low-complexity Whittle index was also employed in~\cite{aoi_whittle_3}, where the authors proposed Channel-Aware Age of Information (CA-AoI), which considers the channel conditions at the source in a multi-source system. In addition, the authors in~\cite{aoi_whittle_8} employ the Whittle index method to design an energy-aware scheduling policy for sources that minimize \ac{aoi} in the system. However, none of these papers consider the \ac{ci} and the resulting \ac{cf} in connection with \ac{aoi}, as proposed in our work.

This work also builds on our prior work~\cite{chou2026towards}, in which we analyzed a system with one information source. Our results showed that \ac{cf} grows with the number of status updates from the source, i.e., the update rate, whereas the gain in average \ac{aoi} is generally nonlinear and may exhibit either diminishing returns or even degradation due to queuing and other delays in the system. Therefore, scheduling updates from sources so as to minimize the average \ac{aoi} while satisfying a constraint on the total available \ac{cf} budget becomes, in general, a more complex problem than the energy-minimization problem, since the \ac{ci} also varies over time. In this paper, we build on our prior findings and propose a scalable scheduling solution, \ac{saoithe}, capable of minimizing \ac{aoi} while ensuring that the set \ac{cf} requirements are met in an environment with a dynamic \ac{ci}.

\section{System Model}
\label{sec:system_model}

\begin{figure}[!t]
    \centering
    \includegraphics[width=0.5\textwidth]{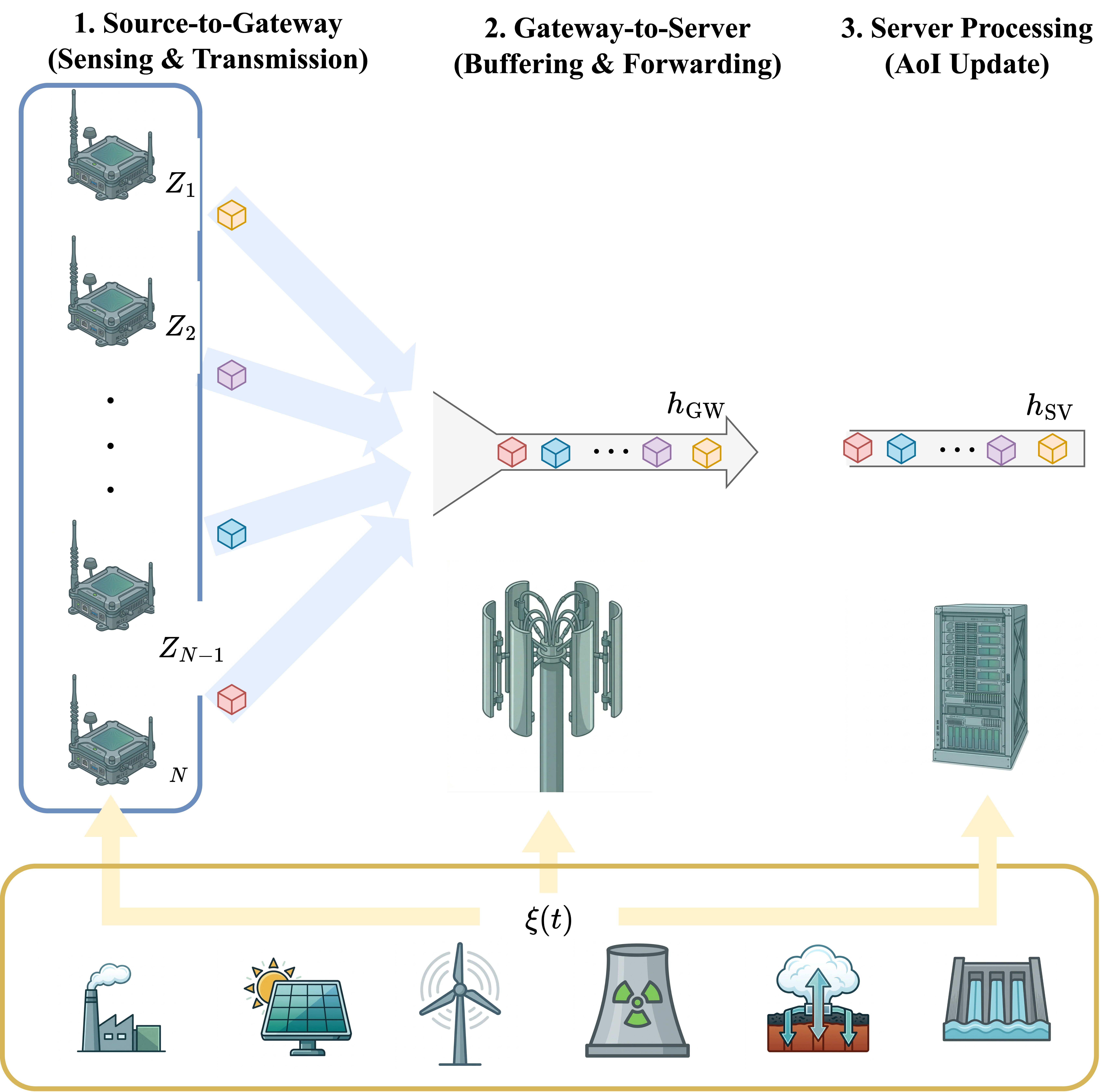}
    \caption{Considered system model with $N$ sources transmitting status updates through a gateway to a server.}
    \label{fig:system_model}
\end{figure}

Let us consider a system of $N$ sensors, i.e., information sources, denoted as $Z_n$, where $n \in \{1,2,\ldots,N\}$, that transmit status updates through a gateway to a server, as illustrated in Fig.~\ref{fig:system_model}. Such a system, in which information is generated by hardware-constrained devices at the edge and transmissions are often scheduled, is typical of \ac{lpwan} based on technologies such as \ac{lorawan}, Sigfox, or Mioty. The system operates over a finite time horizon of length $T$, which is discretized into $T$ decision intervals (slots) indexed by $t \in \{1,\ldots,T\}$. Each generated status update contains a measurement along with a timestamp indicating when it was generated. Therefore, we define the \ac{aoi} for the $n$-th source as follows:

\begin{equation}
\Delta_n(t) = t - z_n(t),
\label{eq:inst_aoi}
\end{equation}
\noindent where $z_n(t)$ is the generation time slot of the most recently received status update from source $n$ available at the server at time slot $t$.


To model the sequential decision-making problem of scheduling status updates over time, we formulate the considered system as an \ac{mdp}.

\textbf{State space}: We define the joint state space for the considered system model as follows:
\begin{equation}
\begin{aligned}
    \mathcal{S}_{t}= &\left(\boldsymbol{\Delta}(t), \mathbf{b}_{\mathrm{DEV}}(t), \boldsymbol{\tau}_{\mathrm{DEV}}(t),\right.\\
    & \left.
    \mathbf{h}_{\mathrm{GW}}(t), \mathbf{b}_{\mathrm{GW}}(t), \boldsymbol{\tau}_{\mathrm{GW}}(t), \mathbf{h}_{\mathrm{SV}}(t)\right),
    \label{eq:state_general}
\end{aligned}
\end{equation}
\noindent where $\boldsymbol{\Delta}(t) = [\Delta_1(t), \ldots, \Delta_N(t)]$ is a vector of \ac{aoi} values for all information sources in the system. Moreover, $\mathbf{b}_{\mathrm{DEV}}(t)$ and $\boldsymbol{\tau}_{\mathrm{DEV}}(t)$ are vectors tracking the busy status and the remaining transmission time for each source, respectively. The terms $\mathbf{b}_{\mathrm{GW}}(t)$ and $\boldsymbol{\tau}_{\mathrm{GW}}(t)$ represent the corresponding forwarding status and timer for the gateway. Furthermore, $\mathbf{h}_{\mathrm{GW}}(t)$ and $\mathbf{h}_{\mathrm{SV}}(t)$ denote the buffered status update ages in the gateway and server buffers, respectively. 

From the perspective of the $n$-th source, the binary variables $b_{\mathrm{DEV},n}(t)$ and $b_{\mathrm{GW},n}(t)$ indicate the busy status of the device and gateway, respectively. Specifically, when $b_{\mathrm{DEV},n}(t)=1$ or $b_{\mathrm{GW},n}(t)=1$, the corresponding entity is in active mode, whereas $0$ indicates idle mode. The variables $\tau_{\mathrm{DEV},n}(t)$ and $\tau_{\mathrm{GW},n}(t)$ track the remaining time in the current transmission and forwarding cycle, respectively. Finally, $h_{\mathrm{GW},n}(t)$ and $h_{\mathrm{SV},n}(t)$, with values in $\{0, \dots, \delta_{\max}\}$, denote the \ac{aoi} of the packet currently residing in the gateway and server buffers, respectively.


\textbf{Actions}:
At each time-step, the system has to coordinate the activity of all network entities, i.e., sources, gateway, and server. As a result, we define the joint action space as $\mathcal{A}_{t}=\mathcal{A}_{\mathrm{DEV}}\times\mathcal{A}_{\mathrm{GW}}\times\mathcal{A}_{\mathrm{SV}}$, with each action $\mathbf{a}(t)\in\mathcal{A}_{t}$ given by:

\begin{equation}
    \mathbf{a}(t) = (\mathbf{a}_{\mathrm{DEV}}(t), \mathbf{a}_{\mathrm{GW}}(t), \mathbf{a}_{\mathrm{SV}}(t)),
    \label{eq:action_general}
\end{equation}

\noindent where $\mathbf{a}_{\mathrm{DEV}}(t)$ is a binary vector indicating which of the $N$ sources initiate a transmission cycle, i.e., $\mathbf{a}_{\mathrm{DEV}}(t) = [a_{\mathrm{DEV},1}(t), \ldots, a_{\mathrm{DEV},N}(t)]^T$. Similarly, $\mathbf{a}_{\mathrm{GW}}(t)$ and $\mathbf{a}_{\mathrm{SV}}(t)$ control the forwarding and processing decisions for the gateway and server, respectively.

From the perspective of an $n$-th source in the system, each action element is a binary decision variable, i.e., $a_{\mathrm{DEV},n}(t), a_{\mathrm{GW},n}(t)$, and $a_{\mathrm{SV},n}(t) \in \{0,1\}$. This means that, when $a_{\mathrm{DEV}, n}(t)=1$, the device initiates sensing and transmission, while $a_{\mathrm{GW},n}(t)=1$ and $a_{\mathrm{SV},n}(t)=1$ indicate that the gateway forwards and the server processes the status update, respectively. Furthermore, due to \ac{lpwan} constraints, an action is feasible only if the corresponding entity is idle, i.e., $b_n(t)=0$, and, for forwarding or processing, if a packet is available in the buffer, i.e., $h_n(t) > 0$. At the system level, the number of simultaneous transmissions is limited by the available channel capacity, which we denote by $M$. Consequently, the feasible transmission actions at every time slot $t$ are constrained by:
\begin{equation}
 \sum_{n=1}^{N} a_{\mathrm{DEV},n}(t)\leq M.
\label{eq:action_constraint}
\end{equation}
\noindent where $a_{\mathrm{DEV},n}(t)$ represents the binary transmission decision for the $n$-th source.






\textbf{Transition Function}: The system dynamics are governed by the physical constraints of a typical \ac{lpwan} system, i.e., store-and-forward logic for status updates. In particular, while a generated status update waits in a buffer before being transmitted, its age continues to increase, which prevents the system from arbitrarily delaying transmissions to reduce \ac{cf} without incurring an \ac{aoi} penalty at the destination. Accordingly, the vector of destination-side \ac{aoi} values, $\boldsymbol{\Delta}(t)$, evolves such that the age of the $n$-th source is reset not to zero, but to the age of the processed status update upon reception. Otherwise, the \ac{aoi} increases linearly:
\begin{equation}
    \Delta_n(t+1) = (\Delta_n(t) + 1)(1 - \delta_n(t)) + h_{\mathrm{SV},n}(t)\delta_n(t),
    \label{eq:aoi_general_tr}
\end{equation}
\noindent where $\delta_n(t) = a_{\mathrm{SV}, n}(t) \cdot \mathbb{I}(h_{\mathrm{SV}, n}(t) > 0)$ indicates that a valid packet from the $n$-th device has been processed and delivered during time slot $t$, and $h_{\mathrm{SV},n}(t)$ denotes the \ac{aoi} of that packet at the time of reception. Reflecting the three-entity architecture of the considered system, the transition function of the $n$-th source can be decomposed into three distinct stages:

\textbf{1. Source-to-Gateway (Sensing \& Transmission):}
The source state of $n$-th source $(b_{\mathrm{DEV},n}, \tau_{\mathrm{DEV},n})$ evolves based on the action $a_{\mathrm{DEV},n}(t)$, i.e., a transmission cycle ($b_{\mathrm{DEV},n}=1$) decrements the timer $\tau_{\mathrm{DEV},n}$. Here, we first define $I_{\mathrm{arr},n}(t)$ as the uplink completion indicator, which is expressed as:

\begin{equation}
    I_{\mathrm{arr},n}(t) = 
    \begin{cases} 
        1, & \text{if } b_{\mathrm{DEV},n}(t)=1 \wedge \tau_{\mathrm{DEV},n}(t)=1, \\
        0, & \text{otherwise}.
    \end{cases}
    \label{eq:uplink_indicator}
\end{equation}

\textbf{2. Gateway-to-Server (Buffering \& Forwarding):}
The gateway buffer state for the $n$-th source $h_{\mathrm{GW},n}(t)$ represents the \ac{aoi} of the packet. If a new packet arrives ($I_{\mathrm{arr},n}=1$), it enters the buffer with an initial \ac{aoi} of $1$ (representing the transmission slot). If the gateway holds the packet, its \ac{aoi} increments. If the gateway the packet ($a_{\mathrm{GW},n}=1$), the buffer clears. Therefore, we can express the evolution of gateway buffer state ($h_{\mathrm{GW},n}(t+1)$) as follows:
\vspace{-5pt}
\begin{equation}
\begin{aligned}
        h_{\mathrm{GW},n}&(t+1) =\\ 
    &\begin{cases} 
        1, & \text{if } I_{\mathrm{arr},n}(t) = 1, \\
        0, & \text{if } a_{\mathrm{GW},n}(t) = 1 \wedge h_{\mathrm{GW},n}(t) > 0, \\
        h_{\mathrm{GW},n}(t) + 1, & \text{if } h_{\mathrm{GW},n}(t) > 0, \\
        0, & \text{otherwise}.
    \end{cases}
\end{aligned}
\label{eq:h_GW_t_1}
\end{equation}

\noindent The server buffer $h_{\mathrm{SV}}(t)$ evolves similarly. If the gateway completes a forwarding cycle ($\tau_{\mathrm{GW}}=1$), the packet arrives at the server, carrying its accumulated \ac{aoi} from the gateway plus the forwarding time. Therefore, the evolution of $h_{\mathrm{SV}}(t)$ can be derived as:

\vspace{-5pt}
\begin{equation}
\begin{aligned}
    h_{\mathrm{SV},n}&(t+1) =\\ 
&    \begin{cases} 
        h_{\mathrm{GW},n}(t) + T_{\mathrm{fwd}}, & \text{if } \tau_{\mathrm{GW},n}(t) = 1, \\
        0, & \text{if } a_{\mathrm{SV},n}(t) = 1 \wedge h_{\mathrm{SV},n}(t) > 0, \\
        h_{\mathrm{SV},n}(t) + 1, & \text{if } h_{\mathrm{SV},n}(t) > 0, \\
        0, & \text{otherwise}.
    \end{cases}
\end{aligned}
\label{eq:h_SV_t_1}
\end{equation}

\textbf{3. Server Processing (AoI Update):}
Finally, the global \ac{aoi} $\boldsymbol{\Delta}(t)$ is updated based on the server's processing actions. More specifically, for each information source $n\in\{1,\ldots, N\}$, the corresponding AoI value $\Delta_{n}(t)$ resets to the packet's current \ac{aoi} $h_{\mathrm{SV},n}(t)$ plus the processing slot, reflecting the true staleness of the information:
\begin{equation}
\Delta_{n}(t+1) = 
\begin{cases} 
h_{\mathrm{SV},n}(t) + 1, & \text{if } \delta_n(t) = 1, \\ 
\min(\Delta_{n}(t) + 1, \delta_{\max}), & \text{otherwise}.
\end{cases}
\label{eq:aoi_transition}
\end{equation}
\textbf{Cost Function}:
To evaluate the system performance, we define the immediate cost function, $\mathcal{C}(\mathcal{S}_t, \mathcal{A}_t)$, primarily based on the freshness of the information. We adopt a quadratic staleness penalty to strongly penalize large peaks in \ac{aoi}, ensuring the reliability of status updates. As a result, it can be expressed as:


\begin{equation}
    \mathcal{C}(\mathcal{S}_{t},\mathcal{A}_{t}) = \sum_{n=1}^{N} \Delta_n^{2}(t).
    \label{eq:cost_function}
\end{equation}


\noindent Simultaneously, we track the environmental impact of the system using \ac{cf} as a key metric~\cite{CFreport}. For the $n$-th source, the \ac{cf} depends on the time-varying \ac{ci}, denoted by $\xi(t)$, and on the energy consumed by that source over time. The \ac{ci} reflects how carbon-intensive the energy source is at a given time, as discussed in~\cite{trihinas2022towards}. We adopt the standard measure of \ac{ci} as the amount of CO$_2$-equivalent emissions produced per kilowatt-hour of consumed energy. 

\section{Problem Formulation}
\label{sec:problem_formulation}




Having defined the considered system with an \ac{mdp}, we now formulate the corresponding optimization problem together with the associated \ac{cf} constraint.
The main objective is to obtain an optimal scheduling policy, $\pi^{\ast}$, that minimizes the cumulative staleness penalty (in the form of \ac{aoi}) over a finite operational horizon of $T$ time slots, e.g., a day, 
while satisfying the environmental and regulatory constraints, i.e., an allocated \ac{cf} budget. First, we define the total cost over the considered time horizon $T$ as follows:


\begin{equation}
    J_{\Delta}(\pi) = \sum_{t=1}^{T} \sum^{N}_{n=1}\Delta^{2}_{n}(t).
    \label{eq:cost_aoi}
\end{equation}


The quadratic penalty $f(\Delta_n(t))=\Delta_n^2(t)$ is adopted to assign greater importance to more stale information. Because the AoI is inherently non-negative ($\Delta_{n} (t)\geq 0, \forall t$), the quadratic function preserves the strict monotonicity of a linear penalty. While the scalar value of the cumulative objective differs, the fundamental direction of the optimization is maintained, as the state that minimizes the immediate penalty remains identical ($\arg \min \Delta (t) = \arg \min \Delta^{2} (t)$). Furthermore, its marginal increase, $f(\Delta+1)-f(\Delta)=2\Delta+1$, grows with the current \ac{aoi}, meaning that an additional delay is penalized more strongly when a source has already remained stale for a long period. This property encourages the scheduler to avoid prolonged freshness degradation and to distribute update opportunities more evenly across sources. The resulting objective $J_{\Delta}(\pi)$ is mathematically distinct from the conventional time-averaged \ac{aoi} objective. In particular, the second moment satisfies $\mathbb{E}[\Delta^2]=(\mathbb{E}[\Delta])^2+\operatorname{Var}(\Delta)$, showing that it reflects both the average \ac{aoi} and its variability. Consequently, minimizing the quadratic \ac{aoi} penalty places additional emphasis on reducing large \ac{aoi} excursions and long tails of staleness, rather than treating all unit increases in \ac{aoi} value equally \cite{whittle_aoi_0}.

Similarly, the total cumulative \ac{cf} ($J_{\mathrm{C}}(\pi)$) and the total transmission duty usage ($J_{\mathrm{D}}(\pi)$) over the horizon are given by:

\begin{equation}
    J_{\mathrm{C}}(\pi) = \sum_{t=1}^{T}\sum_{n=1}^{N} \xi(t) \cdot E_{\mathrm{tot},n}(\mathcal{S}_t, \mathcal{A}_t),
    \label{eq:const_CF}
\end{equation}

and

\begin{equation}
    J_{\mathrm{D}}(\pi) = \sum_{t=1}^{T}\sum^{N}_{n=1} a_{\mathrm{DEV},n}(t) \cdot C_{\mathrm{duty}}, 
    \label{eq:const_tx}
\end{equation}

\noindent respectively, where $C_{\mathrm{duty}}$ represents the normalized airtime cost of a single transmission relative to the decision time step size $t$.




Finally, the constrained optimization problem is formulated as the minimization of the cumulative quadratic \ac{aoi} penalty
subject to constraints on the cumulative \ac{cf}, transmission duty usage, and the channel capacity $M$, and is expressed as follows:

\begin{equation}
\begin{aligned}
    \mathbf{P1}: \quad \min_{\pi} \quad & J_{\Delta}(\pi) \\
    \textrm{s.t.} \quad & J_{\mathrm{C}}(\pi) \leq \kappa, \\
                        & J_{\mathrm{D}}(\pi) \leq D,\\
                        & \sum^{N}_{n=1}a_{\mathrm{DEV},n}(t)\leq M,
\end{aligned}
\label{eq:p1}
\end{equation}

\noindent where $\kappa$ represents the total \ac{cf} budget allocated for the operational horizon, and $D$ denotes the aggregate maximum allowable transmission time for the network, derived from per-device regulatory duty cycle limits (e.g., 1\% of the horizon $T$ per device, yielding a total network budget of $N \times 0.01T$). It is important to note that, because the environmental and hardware constraints ($J_C(\pi) \leq \kappa$ and $J_D(\pi) \leq D$) are formulated as cumulative budgets evaluated over the entirety of the finite horizon $T$, rather than instantaneous per-slot limits, they constitute single scalar inequalities. Consequently, the dual problem does not require optimization over a time-dependent function space (e.g., $\lambda(t)$). Instead, single scalar dual variables are sufficient to penalize the violation of the total respective budgets. 


The problem $\mathbf{P1}$ is a constrained finite-horizon optimization problem. We employ the \textit{Lagrange multiplier method }to relax the hard cumulative constraints into the objective function. We introduce non-negative Lagrange multipliers, $\lambda$ and $\mu$, corresponding to the \ac{cf} budget constraint and the duty cycle constraint, respectively. Therefore, the Lagrangian function $\mathcal{L}(\pi, \lambda, \mu)$ is defined as the sum of the cumulative staleness cost and the penalized constraint violations, which can be expressed as:

\begin{equation}
    \mathcal{L}(\pi, \lambda, \mu) = J_{\Delta}(\pi) + \lambda \left( J_{\mathrm{C}}(\pi) - \kappa \right) + \mu \left( J_{\mathrm{D}}(\pi) - D \right).
    \label{eq:lagrangian_0}
\end{equation}

\noindent By substituting the cumulative definitions from Eqs.~\eqref{eq:cost_aoi},~\eqref{eq:const_CF}, and \eqref{eq:const_tx} into Eq.~\eqref{eq:lagrangian_0}, we can then rewrite the Lagrangian function as the sum of a modified immediate cost function over the horizon $T$:

\begin{equation}
    \mathcal{L}(\pi, \lambda, \mu) = \sum_{t=1}^{T} C_{\lambda, \mu}(\mathcal{S}_t, \mathcal{A}_t) - \lambda \kappa - \mu D,
\end{equation}

\noindent where the immediate Lagrangian cost, $C_{\lambda, \mu}(\mathcal{S}_t, \mathcal{A}_t)$, serves as the per-slot objective. It is constructed by updating the base cost defined in Eq.~\eqref{eq:cost_function} with the weighted resource penalties:

\begin{equation}
    \begin{aligned}
    C_{\lambda, \mu}(\mathcal{S}_t, \mathcal{A}_t) &= \sum^{N}_{n=1}\left[\underbrace{\Delta^2_{n}(t)}_{\text{Base Cost}} \right.\\
    &\left.+ \underbrace{\lambda \xi(t) E_{\mathrm{tot},n}(\mathcal{S}_t, \mathcal{A}_t) + \mu a_{\mathrm{DEV},n}(t)C_{\mathrm{duty}}}_{\text{Resource Penalties}}\right].
    \end{aligned}
    \label{eq:lagrangian_cost}
\end{equation}

This formulation reveals the physical intuition behind the multipliers. More specifically, $\lambda$ acts as a dynamic price for \ac{cf}, inflating the cost of energy-intensive actions during high-\ac{cf} periods, while $\mu$ penalizes channel-access usage. The constant terms $\lambda \kappa$ and $\mu D$ do not affect the optimal policy $\pi^{*}$ for fixed multipliers and can be ignored during the minimization step, which leads to the following simplified unconstrained objective:

\begin{equation}
\begin{aligned}
    \min_{\pi} \sum_{t=1}^{T}\sum_{n=1}^{N} & \left( \Delta_{n}^2(t) + \lambda \xi(t) E_{\mathrm{tot},n}(\mathcal{S}_{t}, \mathcal{A}_{t}) \right. \\
    & \left. + \mu a_{\mathrm{DEV},n}(t) C_{\mathrm{duty}} \right).
    \label{eq:unconstrained_obj}
\end{aligned}
\end{equation}

\noindent This structure allows us to solve for $\pi^{*}$ efficiently by employing a \ac{dp} technique and adapt it for the proposed solution, described in the next section. 


\section{The SAOITHE Framework}
\label{sec:solution}



In this section, we describe the proposed \ac{saoithe} framework, which is capable of optimizing the trade-off between information freshness, i.e., \ac{aoi}, and carbon cost, i.e., \ac{cf}. We employ the Whittle Index approach~\cite{whittle1988restless}, which decomposes the intractable joint optimization into $N$ independent single-source subproblems by relaxing the hard per-slot constraint into a time-averaged constraint. While standard numerical Whittle Index methods reduce complexity from exponential $O(\Delta^{N})$ to linear $O(N\cdot \Delta)$ by iteratively searching for the index, we derive a closed-form expression of the Whittle Index. This eliminates the iterative search, further reducing the complexity to $O(N \cdot \log N)$ and enabling real-time scheduling for a large-scale network deployment.

For an $n$-th source, the relaxed cost function $\mathcal{L}_{n}(\Delta_{n},a_{\mathrm{DEV},n},t)$ at state $\Delta_{n}$ with action $a_{\mathrm{DEV},n}\in \left\{0,1\right\}$ that involves a subsidy $\nu$ for passivity, is defined as:

\begin{equation}
    \mathcal{L}_{n}(\Delta_{n}, u, t) = \Delta_{n}^{2} + a_{\mathrm{DEV},n}C_{n}(t) - (1-a_{\mathrm{DEV},n})\nu,
    \label{eq:relaxedcostfunction}
\end{equation}

\noindent where $C_{n}(t)=\lambda \xi(t)E_{\mathrm{tot},n}+\mu C_{\mathrm{duty}}$ is the corresponding immediate aggregated cost of the update.

\begin{proposition}[Closed-Form Whittle Index]
\label{prop1}
Under the relaxed immediate update model, the problem satisfies indexability, and the Whittle Index $W_n(\Delta, t)$ for source $n$ at state $\Delta$ is given by:
\begin{equation}
    W_n(\Delta, t) = \frac{4\Delta^3 + 9\Delta^2 + 5\Delta}{6} - (\lambda \xi(t) E_{\mathrm{tot},n} + \mu C_{\mathrm{duty}}).
    \label{eq:whittle_index}
\end{equation}
\end{proposition}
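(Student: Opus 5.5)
The plan is to treat each source in isolation under the relaxed \emph{immediate update} model and carry out the standard Whittle programme: (i) reduce the single-source problem to the age recursion $\Delta\mapsto 1$ when a transmission is made and $\Delta\mapsto\Delta+1$ otherwise; (ii) show that, for a fixed subsidy $\nu$, threshold policies are optimal; (iii) compute the average cost of each threshold policy in closed form; (iv) find the subsidy at which the scheduler is indifferent between neighbouring thresholds; (v) conclude indexability from monotonicity of that subsidy in the threshold.

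In the reduction step, the gateway and server buffering stages of the full transition model are collapsed into the instantaneous reset. The update cost $C_n(t)=\lambda\xi(t)E_{\mathrm{tot},n}+\mu C_{\mathrm{duty}}$ is frozen at its current value $C$ for the purpose of computing the index. This quasi-static treatment of $\xi(t)$ is why the resulting index depends on $t$ only through $C_n(t)$. With this model, each source is an MDP on $\Delta\in\{1,2,\dots\}$ with per-slot cost \eqref{eq:relaxedcostfunction}, solved in the time-averaged sense.

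For the threshold step, I would argue optimality of threshold policies from the Bellman equation. Since $\Delta^2$ is increasing, the relative value function $V(\Delta)$ is nondecreasing in $\Delta$. The difference between the passive and active continuations is $V(\Delta+1)-V(1)-C-\nu$, which is therefore nondecreasing. Hence there is an $H$ such that it is optimal to transmit exactly when $\Delta\ge H$.

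Under threshold $H$, the age cycles through $1,\dots,H$. Per cycle this incurs $S(H)=\sum_{k=1}^H k^2=H(H+1)(2H+1)/6$ in staleness cost, one update cost $C$, and the subsidy $\nu$ collected $H-1$ times. The average cost is therefore
\begin{equation*}
f_\nu(H)=\frac{S(H)+C-(H-1)\nu}{H}.
\end{equation*}
Setting $f_\nu(H)=f_\nu(H+1)$ and clearing denominators gives $S(H)+C+\nu = H(H+1)^2$. Solving for the subsidy yields
\begin{equation*}
\nu^\ast(H)=H(H+1)^2-S(H)-C=\frac{H(H+1)(4H+5)}{6}-C=\frac{4H^3+9H^2+5H}{6}-C,
\end{equation*}
which is exactly \eqref{eq:whittle_index} at $H=\Delta$.

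For indexability, I would show that $\nu^\ast(H)$ is strictly increasing in $H$; this is clear because it is a cubic with positive coefficients on $H\ge1$. I would also check that $f_\nu(H)$ is unimodal in $H$: the sign of $f_\nu(H+1)-f_\nu(H)$ matches the sign of $H(H+1)^2-S(H)-C-\nu$, and that expression is increasing in $H$. Together these imply that the optimal threshold is nondecreasing in $\nu$. Equivalently, the passive set $\{\Delta:\Delta<H^\ast(\nu)\}$ grows monotonically from empty to the whole state space, which is Whittle indexability. The index of state $\Delta$ is then the subsidy at which $\Delta$ enters the passive set, namely $\nu^\ast(\Delta)$.

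The main obstacle is rigour in the threshold-optimality and quasi-static steps. Using the unimodality of $f_\nu$ directly within the class of threshold policies is clean, but that class must first be justified through the monotone value function argument. Justifying the frozen $C_n(t)$ under a time-varying $\xi(t)$ is also delicate; I would state it explicitly as part of the ``relaxed immediate update model'' assumption rather than derive it.
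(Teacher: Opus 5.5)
Your proposal is correct and follows the same route as the paper's appendix: restrict to threshold policies, compute the renewal-cycle average cost with staleness $\sum_{k=1}^{H}k^2$, and take the index to be the subsidy that equates thresholds $H=\Delta$ and $H=\Delta+1$, giving $\Delta(\Delta+1)^2-\sum_{k=1}^{\Delta}k^2-C_n(t)$. Your subsidy bookkeeping ($\nu$ collected in the $H-1$ passive slots) differs from the paper's $\Phi(H)=\frac{1}{H}\bigl[\sum_{k=1}^{H}k^2+C_n(t)+\nu\bigr]$ only by the $H$-independent constant $\nu$, so the indifference points coincide; your monotone-value-function and unimodality arguments also supply the threshold-optimality and indexability steps, which the paper only asserts via convexity, prior literature, and numerical checks.
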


\begin{IEEEproof}
The derivation involves evaluating the difference in the long-term average cost between a threshold policy $m$ and $m+1$ for a quadratic aging penalty. The detailed proof is provided in Appendix~\ref{app:whittle_proof}.
\end{IEEEproof}

The validity of the Whittle Index approach relies on the problem satisfying indexability (i.e., the set of passive states decreases monotonically as the subsidy $\nu$ increases). The indexability of convex \ac{aoi} cost functions has been structurally analyzed in prior literature~\cite{whittle_aoi_0}. We further validated this property numerically for our specific parameters to ensure the robustness of the derived index.

\begin{corollary}[Asymptotic Stability]
\label{cor:stability}
For large $\Delta$, the Whittle Index behaves asymptotically as $W_n(\Delta) \approx \frac{2}{3}\Delta^3$. Since the transmission cost $C_n(t)$ is bounded, there exists a critical age $\Delta_{\mathrm{crit}}$ such that for all $\Delta > \Delta_{\mathrm{crit}}$, the index $W_n(\Delta) > 0$. This guarantees that the scheduling policy prevents infinite staleness.
\end{corollary}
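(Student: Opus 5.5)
The plan is to read everything off the closed form in Proposition~\ref{prop1}. Write $W_n(\Delta,t) = P(\Delta) - C_n(t)$ with $P(\Delta) = \frac{4\Delta^3+9\Delta^2+5\Delta}{6}$ and $C_n(t)=\lambda\xi(t)E_{\mathrm{tot},n}+\mu C_{\mathrm{duty}}$. The corollary splits into two claims. The first is an asymptotic statement about $P$. The second is a threshold statement that only requires $P$ to be increasing and unbounded while $C_n(t)$ stays uniformly bounded in $t$.

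\textbf{Step 1 (asymptotics).} Since $P(\Delta)/\Delta^3 = \frac{2}{3} + \frac{3}{2\Delta} + \frac{5}{6\Delta^2} \to \frac{2}{3}$, we have $P(\Delta)\sim \frac{2}{3}\Delta^3$. Because $C_n(t)$ does not depend on $\Delta$, $W_n(\Delta,t)/\Delta^3 \to \frac{2}{3}$ as well, uniformly in $t$ once $C_n$ is bounded. This is the meaning of $W_n(\Delta)\approx\frac{2}{3}\Delta^3$.

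\textbf{Step 2 (bounding $C_n$).} The CI trace $\xi(t)$ is a finite, nonnegative real-world signal over a finite horizon, so $\xi_{\max}=\max_t \xi(t)<\infty$. The energy per update $E_{\mathrm{tot},n}$ is a fixed hardware quantity, and $\lambda,\mu\ge 0$ are fixed for the inner minimization. Hence
\[
0\le C_n(t)\le \bar C_n := \lambda\xi_{\max}E_{\mathrm{tot},n}+\mu C_{\mathrm{duty}}.
\]
Taking $\bar C=\max_n \bar C_n$ gives a bound that is uniform over sources as well.

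\textbf{Step 3 (critical age).} $P$ is strictly increasing on $\Delta\ge 0$, since $P'(\Delta)=\frac{12\Delta^2+18\Delta+5}{6}>0$. It is also unbounded. So define $\Delta_{\mathrm{crit}}=\min\{\Delta\in\mathbb{N}: P(\Delta)>\bar C\}$. Then for every $\Delta>\Delta_{\mathrm{crit}}$ and every $t$,
\[
W_n(\Delta,t)\ge P(\Delta)-\bar C>0.
\]
An explicit bound follows from $P(\Delta)\ge\frac{2}{3}\Delta^3$, which gives $\Delta_{\mathrm{crit}}\le\lceil(3\bar C/2)^{1/3}\rceil$.

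\textbf{Step 4 (no infinite staleness).} This step carries the real content. A positive index means that the subsidy needed to keep the source passive exceeds zero, so transmitting is preferred in the relaxed single-source problem. The ages of passive sources still grow by one per slot. Therefore, under the top-$M$ selection rule with ties and capacity considered, a source whose age exceeds $\Delta_{\mathrm{crit}}$ has an index that grows cubically while the others' indices are bounded by their own ages. Its index eventually exceeds those of competitors, so it is scheduled within finitely many slots.

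I expect this last point to be the main obstacle, because the statement only asserts positivity. I would handle it as follows. If source $n$ is never scheduled, its age climbs, up to the cap $\delta_{\max}$, above every other source's post-update age. An update resets another source's age to a bounded packet age, since $h_{\mathrm{SV}}$ values are bounded by $\delta_{\max}$. By the monotonicity of $P$, source $n$ must then enter the top $M$. With the cap $\delta_{\max}$, staleness is bounded trivially anyway, so the substantive claim reduces to Steps 1--3.
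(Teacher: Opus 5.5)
Your Steps 1--3 are correct and are essentially the paper's proof. The paper only computes $\lim_{\Delta\to\infty}W_n(\Delta)/\Delta^3=\frac{2}{3}$ and observes that the cubic term then dominates the bounded cost. Your version is tighter in three useful ways:
\begin{itemize}
\item Bounding $C_n(t)$ by $\bar C$ through $\xi_{\max}$ makes $\Delta_{\mathrm{crit}}$ uniform in $t$; the paper's limit is taken at a fixed $t$.
\item The monotonicity of $P$ turns ``eventually positive'' into a sharp threshold.
\item The bound $\Delta_{\mathrm{crit}}\le\lceil(3\bar C/2)^{1/3}\rceil$ is exactly the cube-root scaling the paper later plots in Fig.~\ref{fig:whittle_index}.
\end{itemize}

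Step 4 goes beyond anything the paper proves, and as written it does not hold. The paper's proof simply ends with ``forcing source activation.'' There are three problems.

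\emph{Positivity, not sorting, gates activation.} In Alg.~\ref{alg:whittle_scalable}, a source in the top $M$ is activated only if $W_{(k)}>0$. Eq.~\eqref{eq:aoi_transition} caps the age at $\delta_{\max}$. So whenever $W_n(\delta_{\max},t)\le 0$, i.e., the current $C_n(t)$ pushes $\Delta_{\mathrm{crit}}$ to or beyond the cap, the source is never activated at that CI level, no matter how the sort goes. The paper's own Fig.~\ref{fig:whittle_index} places $\Delta_{\mathrm{crit}}$ above $50$~min at the top of the plotted CI range. That exceeds $\delta_{\max}=10$ slots of $5$~min (Table~\ref{tab:params}).

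\emph{Ties at the cap.} Several sources can sit at the cap with the same maximal index. Whether source $n$ ``must enter the top $M$'' then depends on tie-breaking.

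\emph{Wrong growth mechanism.} The claim that the index ``grows cubically'' contradicts the cap. Without the cap, ``eventually exceeds those of competitors'' is still the wrong mechanism, because the competitors' ages grow too.

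In an uncapped relaxed model with immediate resets and a common cost $C(t)$ as in Alg.~\ref{alg:whittle_scalable}, the right argument is a counting one. Suppose $n$ is not served at slot $t$ while $W_n>0$. Then the $M$ served sources all had age at least $\Delta_n(t)$, and they are reset below $\Delta_n(t+1)$. Unserved sources keep their order relative to $n$. Hence the set of sources ranked at or ahead of $n$ shrinks by $M$ per slot, and $n$ is served within $\lceil N/M\rceil$ slots.

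In the paper's capped model, your closing remark (staleness is bounded by $\delta_{\max}$) is the only rigorous reading of the corollary's last sentence, and it does not use the policy at all. So either drop the scheduling claim in Step 4 or state it explicitly as an interpretive remark, as the paper does.
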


\begin{IEEEproof}
We evaluate the limit of the index normalized by the cubic term:
\begin{equation}
    \lim_{\Delta \to \infty} \frac{W_n(\Delta)}{\Delta^3} = \lim_{\Delta \to \infty} \frac{4\Delta^3 + 9\Delta^2 + 5\Delta - 6C_n(t)}{6\Delta^3} = \frac{2}{3}.
\end{equation}
Since the limit is a strictly positive constant, the urgency term $W_n(\Delta)$ dominates the cost term $C_n(t)$ for sufficiently large $\Delta$, forcing source activation.
\end{IEEEproof}

The closed-form Whittle Index derived in Proposition~\ref{prop1} applies strictly to sources where transmission resets the \ac{aoi} to 1 ($\Delta \to 1$). However, the gateway and server operate under buffered-update dynamics (Eqs.~\ref{eq:h_GW_t_1} and~\ref{eq:h_SV_t_1}), where forwarding a packet resets the system \ac{aoi} to the age of the buffered packet ($h_{n}=h_{\left\{\mathrm{GW},n,\mathrm{SV},n\right\}}$). This extends the state relevant to source $n$'s decoupled subproblem to the joint pair $(\Delta_n, h_n)$, for which the optimal scheduling decision generally takes the form of a switching curve rather than a scalar threshold. Because the indexability of such multi-dimensional \ac{aoi} models has already been structurally established in recent literature~\cite{whittle_aoi_0}, we build upon those foundational proofs and proceed with a more informal analysis to derive a practical extension of the index.


To extend the scalable framework to incorporate these entities, we first define the cubic staleness urgency function $U(\cdot)$ identified in Proposition~\ref{prop1} as:

\begin{equation}
    U(x) = \frac{4x^3 + 9x^2 + 5x}{6}.
\end{equation}

\noindent We then generalize the Whittle Index as the differential urgency between the current system age $\Delta$ and the buffered packet age $h$, which is expressed as:

\begin{equation}
    W_{\mathrm{buf},n}(\Delta_{n}, h_{n}) \approx [U(\Delta_{n}) - U(h_{n})] - C_{n}(t),
    \label{eq:buffered_index}
\end{equation}

\noindent 
where $C_n(t)$ represents the immediate resource cost (energy and duty cycle penalty) for the specific operation \cite{akyildiz20206g}. While Eq.~\eqref{eq:buffered_index} is not theoretically derived via a formal indexability proof for the full multi-stage model, it captures the physical intuition that if the buffered packet is already stale ($h_n \approx \Delta_n$), the differential urgency approaches zero, thereby suppressing non-energy-efficient forwarding actions \cite{akyildiz20206g}.


We outline the proposed solution in Alg.~\ref{alg:whittle_scalable}. In short, our solution decouples decision-making into four sequential steps per time slot:

\textit{1) Dynamic Cost Assessment (Step 1):} At the beginning of time slot $t$, the scheduler computes the global effective cost $C(t)$. This scalar aggregates the variable carbon penalty $\xi(t)$ and the fixed duty-cycle cost, weighted by the optimal multipliers learned from the previous stage.

\begin{algorithm}[!t]
\caption{Proposed SAOITHE approach} 
\label{alg:whittle_scalable}
\begin{algorithmic}[1]
\REQUIRE States $\mathbf{\Delta}(t)$, $\xi(t)$, Multipliers $\lambda^{\ast}, \mu^{\ast}$, Capacity $M$.
\ENSURE Activation Vector $\mathbf{a}_{\mathrm{DEV}}(t)$.

\STATE // Step 1: Compute Dynamic Cost
\STATE Initial $\lambda= \mu=0$.
\STATE $C(t) \leftarrow \lambda^* \xi(t) E_{\mathrm{tot}} + \mu^* C_{\mathrm{duty}}$

\STATE // Step 2: Compute Indices (Parallel)
\FORALL{$n \in \{1,\ldots, N\}$}
    \IF{$b_{\mathrm{DEV},n}(t) = 0$}
        \STATE $W_{n}(t) \leftarrow U(\Delta_{n}) - C(t)$ 
    \ELSIF{$h_{\mathrm{GW},n}(t) > 0$}
        \STATE $W_{n}(t) \leftarrow [U(\Delta_{n}) - U(h_{\mathrm{GW},n}(t))] - C(t)$ 
    \ELSE
        \STATE $W_{n}(t) \leftarrow 0$ 
    \ENDIF
\ENDFOR

\STATE // Step 3: Prioritization
\STATE Sort $\mathbf{W}$ descending: $W_{(1)} \ge W_{(2)} \ge \ldots \ge W_{(N)}$
\STATE $\mathcal{A}^* \leftarrow \emptyset$
\FOR{$k = 1$ to $\min(N, M)$}
\STATE \textbf{if} $W_{(k)} > 0$ \textbf{then} add $(k)$ to $\mathcal{A}^*$ \textbf{else break}
\ENDFOR

\STATE // Step 4: Execute \& Update States
\FORALL{$n \in \{1, \dots, N\}$}
\STATE $a_{\mathrm{DEV},n} \leftarrow 1$ if $n \in \mathcal{A}^*$ else $0$
\STATE Update $\Delta_n$ and channel states based on $a_{\mathrm{DEV},n}$
\ENDFOR

\RETURN $\mathbf{a}_{\mathrm{DEV}}(t)$
\end{algorithmic}
\end{algorithm}

\textit{2) Parallel Index Computation (Step 2):} For every source $n \in \{1, \dots, N\}$, the algorithm determines the appropriate staleness urgency based on the current state of that source. If the source is idle, the urgency is calculated as $U(\Delta_n)$ using the cubic polynomial derived in Proposition~\ref{prop1}. If a packet for source $n$ is currently residing in the gateway buffer, the scheduler instead evaluates the differential urgency $[U(\Delta_n) - U(h_{\mathrm{GW},n})]$ as defined in Eq.~\eqref{eq:buffered_index}. The Whittle Index $W_n(t)$ is then obtained by subtracting the global effective cost $C(t)$ from the calculated urgency value. Since these calculations rely solely on the local state components $(\Delta_n, h_{\mathrm{GW},n})$, this step maintains $O(1)$ complexity per source and is fully parallelizable.

\textit{3) Prioritization (Step 3):} The scheduler sorts the indices in descending order to identify the top-$M$ subset $\mathcal{A}^*$. It strictly respects the channel capacity $M$ while excluding any sources with negative indices to maintain energy efficiency.

\textit{4) Execution and Update (Step 4):} 
Finally, the scheduler generates the activation vector $a_{\mathrm{DEV}}$. The system then updates the local states $\Delta_n$ and channel variables. Specifically, once the device activation $a_{\mathrm{DEV}}$ is determined, the forwarding and processing actions of the gateway and server deterministically follow the multi-stage buffer dynamics defined in Eqs~\eqref{eq:h_GW_t_1} and~\eqref{eq:h_SV_t_1}.

The state space defined in Eq.~\ref{eq:state_general} grows exponentially with the number of sources in the system. As a result, obtaining the optimal scheduling policy through an exact \ac{mdp} solution quickly becomes computationally intractable as $N$ increases. This motivates the need for scalable scheduling policies whose per-slot computational cost remains manageable even in larger systems. To demonstrate that the proposed \ac{saoithe} solution is practically deployable, we therefore analyze its computational complexity and compare it with that of optimal and existing sub-optimal approaches.

\begin{proposition}[Computational Complexity]
\label{prop:complexity}
The per-slot computational complexity of the proposed \ac{saoithe} algorithm is $O(N \cdot \log N)$. This represents a significant reduction compared to the exponential complexity, i.e., $O(\delta_{\max}^N)$ , of the optimal \ac{dp} solution and remains scalable with the number of sources.
\end{proposition}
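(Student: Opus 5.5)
The plan is to bound the cost of each of the four steps of Alg.~\ref{alg:whittle_scalable} separately and then add the bounds, since the steps run one after another within each slot. \textbf{Step 1} evaluates the scalar $C(t)=\lambda^{*}\xi(t)E_{\mathrm{tot}}+\mu^{*}C_{\mathrm{duty}}$ once. The multipliers $\lambda^{*},\mu^{*}$ are inputs fixed offline, so this step costs $O(1)$ per slot. Any outer dual-update loop is excluded, because the statement concerns per-slot complexity.

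\textbf{Step 2} computes one index per source. By Proposition~\ref{prop1}, $U(\Delta_n)$ is a fixed cubic polynomial. With Horner's rule it needs a constant number of arithmetic operations. The buffered branch \eqref{eq:buffered_index} evaluates $U$ twice and takes one difference, which is still $O(1)$. The branch conditions read only the local components $b_{\mathrm{DEV},n}$ and $h_{\mathrm{GW},n}$. Summing over $n$ gives $O(N)$.

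\textbf{Step 3} sorts $N$ real numbers with a comparison sort such as merge sort or heapsort, at cost $O(N\log N)$. It then scans at most $\min(N,M)\le N$ entries, at cost $O(N)$. One could note that partial selection of the top $M$ entries with a heap costs $O(N\log M)$, which is still bounded by $O(N\log N)$. \textbf{Step 4} writes $N$ activation bits and applies the per-source transitions \eqref{eq:h_GW_t_1}, \eqref{eq:h_SV_t_1} and \eqref{eq:aoi_transition}. Each transition is a constant-time case update, so this step is $O(N)$. The total is $O(1)+O(N)+O(N\log N)+O(N)=O(N\log N)$.

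For the comparison with the optimal \ac{dp} solution, the plan is to count states. Each $\Delta_n$ takes values in $\{1,\dots,\delta_{\max}\}$, so the AoI component alone of \eqref{eq:state_general} has $\delta_{\max}^{N}$ configurations. The buffer and timer components only multiply this by further per-source factors. A backward-induction \ac{dp} must visit every state in each stage, so each slot costs at least $\Omega(\delta_{\max}^{N})$. Counting the $\sum_{k\le M}\binom{N}{k}$ joint actions would add a further combinatorial factor on top of this. The exponential-versus-quasi-linear gap therefore follows.

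The main obstacle is pinning down the per-slot cost of the \ac{dp} benchmark. The claim $O(\delta_{\max}^N)$ ignores the action enumeration and the extra buffer states. I would therefore state it as a lower bound dominated by the AoI state count, treating the other dimensions as constants per source, rather than as an exact figure.
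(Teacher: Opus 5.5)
Your proposal is correct and follows the paper's argument in Appendix~\ref{app:complexity}: a constant-cost closed-form index per source gives $O(N)$, and a comparison sort of the $N$ indices gives $O(N\log N)$, which dominates. You go somewhat further than the paper by also costing Steps~1 and~4 and by backing the DP comparison with a state-counting argument, which the paper only asserts. Treating the DP figure as an $\Omega(\delta_{\max}^N)$ lower bound rather than an exact $O(\cdot)$ bound is also the direction the claimed reduction actually needs.
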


\begin{IEEEproof}
The complexity analysis is provided in Appendix~\ref{app:complexity}.
\end{IEEEproof}

In summary, although the underlying state space $\mathcal{S}_{t}$ grows exponentially with $N$, the \ac{saoithe} solution avoids this bottleneck by relying on a policy whose per-slot complexity scales only as $O(N \cdot \log N)$. This makes the approach suitable for practical deployment in real systems with a large number of sources, as we show in the evaluation presented in the next section.

\section{Evaluation and Results}
\label{sec:validation}


In this section, we evaluate the performance of the proposed \ac{saoithe}. First, we compare it against the optimal \ac{dp} solution to verify that the proposed scheduler achieves near-optimal performance. We then benchmark the proposed scheduler against two baseline policies, namely Round Robin and Random. The Round Robin policy transmits status updates periodically with a fixed period $p$, where $p$ is chosen such that the number of transmissions allowed by the carbon budget $\kappa$ under the average grid carbon intensity \(\bar{\xi}(t)\) is respected. In contrast, the Random policy is configured to have the same average inter-transmission interval $p$, but in each time slot it performs an independent Bernoulli trial with transmission probability \(p_{\mathrm{tx}} = 1/p\), resulting in geometrically distributed inter-transmission intervals.

To reflect a practical \ac{lpwan} setting, we adopt \ac{lorawan} Class A~\cite{lorawan_spec} to model all relevant timing parameters and energy consumption. For example, standard \ac{lorawan} gateways can demodulate only a limited number of simultaneous transmissions; therefore, the number of gateway channels $M$ is set to $M=8$, which is typical for Class-A systems~\cite{adelantado2017understanding,lorawan_Mis8}. 
To model the energy consumption and timing constraints of the system, we define several physical and hardware-specific parameters as summarized in Table~\ref{tab:params}. The physical layer is configured with a bandwidth ($BW$) of $125$~kHz and a spreading factor ($SF$) of $12$, transmitting packets with a length ($L$) of 256~bits. For the server-side processing, we assume a computational capacity ($F_{\mathrm{sv}}$) of 10~GFLOP/s, where each status update requires $F_{\mathrm{task}} = 50$~MFLOP for processing and verification. The energy model is derived from the power consumption levels of the network entities: the \ac{iot} devices consume $P_{\mathrm{IoT,tx}} = 0.125$~W during transmission and $P_{\mathrm{IoT,rx}} = 0.01$~W during the two mandatory Class-A downlink windows, each lasting $100$~ms, which are counted as a marginal per-transmission cost. In contrast, the gateway receive power is absorbed into the system idle power $P_{\mathrm{idle}} = 101.5$~W, since the gateway continuously listens for all $N$ sources regardless of scheduling decisions and is therefore treated as a fixed infrastructure overhead rather than a per-packet cost. Similarly, the gateway forwarding and server processing power ($P_{\mathrm{GW,tx}}$ and $P_{\mathrm{SV,proc}}$) contribute to $E_{\mathrm{tot}}$ as active costs. Finally, the operational horizon $T$ consists of $288$ slots, each with a duration $\tau_{\mathrm{slot}} = 5$~minutes, ensuring a total evaluation period of $24$ hours.

\begin{figure}[t]
\centering
	\centering
    \begin{tikzpicture}
\definecolor{plotblue}{RGB}{0, 114, 178}
\definecolor{plotorange}{RGB}{230, 159, 0}

\definecolor{plotgreen}{RGB}{0, 158, 115}
\definecolor{plotpink}{RGB}{213, 94, 0}
\definecolor{plotpurple}{RGB}{204, 121, 167}
\definecolor{plotgray}{RGB}{153, 153, 153}

\begin{axis}[
    axis lines=middle,
	ylabel= $\overline{\Delta}(min)$,
    xlabel=$N$,
    height=5.0cm,
    width=9.5cm,
	grid=both,
	major grid style={gray!35},
    xmin=0.95,
    xmax=6.05,
    ymin = 3.15,
    ymax=3.85,
    ytick={3.2, 3.4, 3.6, 3.8},
    yticklabels={16, 17, 18, 19},
	legend style={draw=none,at={(0.65,0.45)},
    anchor=north,legend columns=1},
]











\addplot[line width=1pt, color=plotblue, mark=diamond*, smooth]table[x=N_devices,y=SAOITHE_Whittle_AoI,col sep=comma]{csv_data/dp_vs_whittle_aoi.csv};

\addlegendentry{Proposed SAOITHE}
\addplot[line width=1pt, color=plotgreen, dashed, mark=triangle*,]table[x=N_devices,y=SAOITHE_DP_AoI,col sep=comma]{csv_data/dp_vs_whittle_aoi.csv};
\addlegendentry{Dynamic Programming (DP)}









\end{axis}

\end{tikzpicture}
    \caption{ Performance comparison of the proposed theoretically
optimal DP solution against the proposed SAOITHE solution for
small network sizes (up to $N = 6$).}
	\label{fig:optimal_comparison}
	\vspace{-5pt}
\end{figure}

\begin{figure*}
	\centering
	\includegraphics[width=0.99\textwidth]{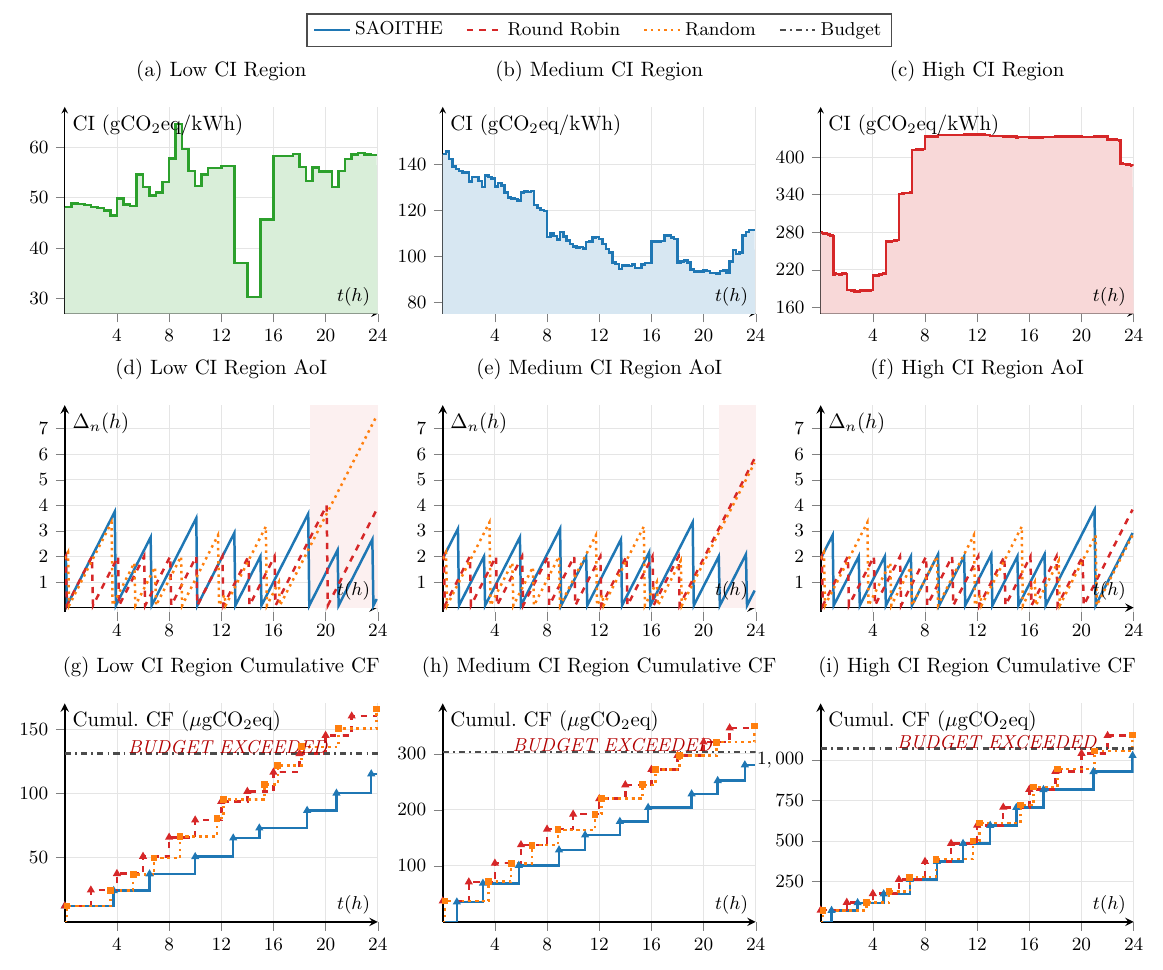}
    \caption{Comparison of CI, resulting AoI, and cumulative CF for SAOITHE, Round Robin, and Random scheduling in low, medium, and high CI regions. $N=1$}
	\label{fig:time_results}
	\vspace{-10pt}
\end{figure*}

\begin{table}[t]
\centering
\caption{Values of experimental parameters.}
\label{tab:params}
\begin{tabular}{llll}
\hline
\textbf{Parameter} & \textbf{Value} & \textbf{Parameter} & \textbf{Value}  \\
\hline
BW                          & 125~kHz               & $C_{\mathrm{duty}}$          & 1296~ms              \\
$D$                         & 1\%                   & $\Delta_{\max}$              & 10~slots            \\
$E_{\mathrm{tot}}$          & 0.9251~J              & $F_{sv}$                     & 10~GFLOP/s          \\
$F_{\mathrm{task}}$         & 50~MFLOP              & $\kappa$                     &\{.5, 2, 5, 21.5\}$\mu$gCO$_2$eq \\
$L$                         & 256~bits              & $M$                          & 8                   \\
$N$                         & \{10, 50, 100\}       & $P_{\mathrm{GW,idle}}$       & 1.5~W               \\
$P_{\mathrm{GW,tx}}$        & 3~W                   & $P_{\mathrm{idle}}$          & 101.5~W             \\
$P_{\mathrm{IoT,rx}}$       & 0.010~W               & $P_{\mathrm{IoT,tx}}$        & 0.125~W             \\
$P_{\mathrm{SV,proc}}$       & 150~W                 & SF                           & 12                  \\
$T$                         & 288 slots (24~h)      & $\tau_{\mathrm{slot}}$       & 5~min (300~s)       \\
\hline
\end{tabular}
\vspace{-15pt}
\end{table}

\subsection{Comparison to the Optimal DP Solution,}

The \ac{dp} solution solves the considered scheduling problem by enumerating all feasible system states in the given time slot and computing the minimum cumulative cost through backward induction on the Bellman recursion. Meaning, that for the considered problem, the solver jointly tracks the \ac{aoi}, gateway and server busy states, buffer ages, and the associated \ac{cf} costs, then selects the action that minimizes the sum of immediate cost and the future value. However, this optimal solution becomes impractical as the state space size expands exponentially with $N$. Therefore, both memory and runtime requirements increase too quickly for realistic deployments. In this paper, we only employ the \ac{dp} solution to demonstrate that the proposed \ac{saoithe} solution performs nearly optimally.

Fig.~\ref{fig:optimal_comparison} shows how the proposed \ac{saoithe} and the optimal \ac{dp} solution result in nearly identical average \ac{aoi}, demonstrating that \ac{saoithe} preserves the scheduling behavior of the \ac{dp} benchmark while offering far lower computational cost. Interestingly, for $N = 1$ \ac{saoithe} obtains a slightly lower average \ac{aoi} than \ac{dp} approach due to the relaxed Lagrangian formulation enabling more aggressive budget use. Note that such results do not imply global superiority of \ac{saoithe}, as \ac{dp} remains the exact optimum under the original hard constraints. Overall, the results indicate that \ac{saoithe} is a practical, near-optimal alternative when a theoretical optimum \ac{dp}-based approach is computationally infeasible.

\subsection{Performance over Time and Across CI Regions}

In Fig.~\ref{fig:time_results}, we illustrate how the \ac{saoithe}, Round Robin, and Random scheduling approaches evolve over time across different \ac{ci} regions. Fig.~\ref{fig:time_results}(a)--(c) show the variation of \ac{ci} over time, Fig.~\ref{fig:time_results}(d)--(f) show the corresponding evolution of \ac{aoi}, and Fig.~\ref{fig:time_results}(g)--(i) show the cumulative \ac{cf} budget consumption. From the \ac{ci} traces (Fig.~\ref{fig:time_results}(a)--(c)), we observe that, within a given region, the highest \ac{ci} values can be up to three times larger than the lowest ones, indicating high temporal variation. The \ac{aoi} results (Fig.~\ref{fig:time_results}(d)--(f)) further show that both the Random and Round Robin policies may exceed the available budget, whereas the proposed \ac{saoithe} remains within the allocated budget, as seen in Fig.~\ref{fig:time_results}(g)--(i). As a result, toward the end of the day, the \ac{aoi} under both Round Robin and Random increases significantly, since the system no longer has sufficient budget available to transmit new updates, while \ac{saoithe} still has budget remaining and can transmit new updates, thus keeping the \ac{aoi} lower.

\begin{figure}
	\centering
	\large\begin{tikzpicture}
\begin{axis}[
    width=\columnwidth,
    height=6cm,
    axis lines=middle,
    ymin=0,
    ymax=70,
    ylabel={$\overline{\Delta} (h)$},
    xlabel={$\bar{\xi}$},
    xmin=0.4,
    xmax=5.9,
    ytick={6, 12,24,36,48,60},
    yticklabels={0.5, 1,2,3,4,5},
    xtick={1,3,5},
    xticklabels={
        Low CI Region\\($\bar{\xi} = 52 \frac{gCO_2eq}{kWh}$),
        Medium CI Region\\($\bar{\xi} = 111 \frac{gCO_2eq}{kWh}$),
        High CI Region\\($\bar{\xi} = 374 \frac{gCO_2eq}{kWh}$)
    },
    ymajorgrids=true,
    grid style=dashed,
    tick label style={font=\footnotesize, align=center},
    label style={font=\footnotesize},
    legend style={
    at={(0.5,0.98)},
    anchor=north west,
    draw=black,
    fill=white,
    font=\footnotesize,
    /tikz/every even column/.append style={column sep=2pt}
},
legend image post style={xscale=1.4,yscale=1.4},
    legend cell align=left
]


\addplot[
    ybar,
    bar width=0.20,
    draw=black,
    fill=NZgreen,
    forget plot
] coordinates {(0.85,3.92)};

\addplot[
    ybar,
    bar width=0.20,
    fill=NZgreen,
    forget plot
] coordinates {(1.10,4.97)};

\addplot[
    ybar,
    bar width=0.20,
    draw=black,
    fill=NZgreen!35,
    pattern=north west lines,
    pattern color=black,
    forget plot
] coordinates {(1.10,4.979288125)};

\addplot[
    ybar,
    bar width=0.20,
    fill=NZgreen,
    forget plot
] coordinates {(1.35,8.958576249)};

\addplot[
    ybar,
    bar width=0.20,
    fill=NZgreen,
    pattern=dots,
    forget plot
] coordinates {(1.35,8.958576249)};

\addplot[
    ybar,
    bar width=0.20,
    draw=black,
    fill=SIblue,
    forget plot
] coordinates {(2.85,8.5)};

\addplot[
    ybar,
    bar width=0.20,
    draw=black,
    fill=SIblue,
    forget plot
] coordinates {(3.10,10.1)};

\addplot[
    ybar,
    bar width=0.20,
    draw=black,
    fill=SIblue!70,
    pattern=north west lines,
    forget plot
] coordinates {(3.10,10.1)};

\addplot[
    ybar,
    bar width=0.20,
    draw=black,
    fill=SIblue,
    forget plot
] coordinates {(3.35,19.2)};

\addplot[
    ybar,
    bar width=0.20,
    draw=black,
    fill=SIblue!40,
    pattern=dots,
    forget plot
] coordinates {(3.35,19.2)};

\addplot[
    ybar,
    bar width=0.20,
    draw=black,
    fill=JPred,
    forget plot
] coordinates {(4.85,18.5)};

\addplot[
    ybar,
    bar width=0.20,
    draw=black,
    fill=JPred,
    forget plot
] coordinates {(5.10,32.72224274)};

\addplot[
    ybar,
    bar width=0.20,
    draw=black,
    fill=JPred!70,
    pattern=north west lines,
    forget plot
] coordinates {(5.10,32.72224274)};

\addplot[
    ybar,
    bar width=0.20,
    draw=black,
    fill=JPred,
    forget plot
] coordinates {(5.35,64.44448549)};

\addplot[
    ybar,
    bar width=0.20,
    draw=black,
    fill=JPred!40,
    pattern=dots,
    forget plot
] coordinates {(5.35,64.44448549)};

\addlegendimage{ybar,ybar legend,draw=black,fill=gray!70}
\addlegendentry{SAOITHE}

\addlegendimage{ybar,ybar legend,draw=black,fill=gray!55,pattern=north west lines}
\addlegendentry{Round Robin}

\addlegendimage{ybar,ybar legend,draw=black,fill=gray!40,pattern=dots}
\addlegendentry{Random}

\end{axis}
\end{tikzpicture}
    \caption{Average AoI for low, medium, and high CI regions  with $N=50, \kappa = 21.5$ $\mu$~gCO$_{2}$eq.}
	\label{fig:CI_over_regions_results}
	\vspace{-10pt}
\end{figure}

\begin{figure}[t]
\centering
	\centering
    \begin{tikzpicture}
\definecolor{plotblue}{RGB}{0, 114, 178}
\definecolor{plotorange}{RGB}{230, 159, 0}

\definecolor{plotgreen}{RGB}{0, 158, 115}
\definecolor{plotpink}{RGB}{213, 94, 0}
\definecolor{plotpurple}{RGB}{204, 121, 167}
\definecolor{plotgray}{RGB}{153, 153, 153}

\begin{axis}[
	ymode=log,
	xmode=log,
	ylabel= $\overline{\Delta}(min)$,
    xlabel=$N$,
	grid=both,
	major grid style={gray!35},
    ytick={1, 10, 100, 200},
    yticklabels={5, 50, 500},
    ymin=0.5,
    ymax=550,
    xmin=2,
    xmax=305,
	legend style={draw=none,at={(0.4,1.3)},
    anchor=north,legend columns=3},
    ylabel style={
    rotate=-90,
    xshift=22pt,
    yshift=72pt},
]


\addlegendimage{no markers, black, line width=2pt}
\addlegendentry{SAOITHE}

\addlegendimage{no markers, black, dashed, line width=2pt}
\addlegendentry{Round Robin}

\addlegendimage{no markers, black, dotted, line width=2pt}
\addlegendentry{Random}

\addlegendimage{no markers, plotblue, line width=2pt}
\addlegendentry{$\kappa=.5\mu  CO_2eq$}

\addlegendimage{no markers, plotgreen, line width=2pt}
\addlegendentry{$\kappa=2\mu CO_2eq$}

\addlegendimage{no markers, plotorange, line width=2pt}
\addlegendentry{$\kappa=5 \mu CO_2eq$}

\addlegendimage{area legend, fill=gray!25, draw=none}
\addlegendentry{Limit}

\addplot[line width=3pt, color=gray!25,  smooth]table[x=N,y=AoI_Gateway_Floor,col sep=comma]{csv_data/aoi_vs_N.csv};

\addplot[line width=1pt, color=plotblue, smooth]table[x=N,y=AoI_SAO_Tight_0005,col sep=comma]{csv_data/aoi_vs_N.csv};
\addplot[line width=1pt, color=plotblue, dashed, smooth]table[x=N,y=AoI_RR_Tight_0005,col sep=comma]{csv_data/aoi_vs_N.csv};

\addplot[line width=1pt, color=plotblue, dotted, smooth]table[x=N,y=AoI_Rnd_Tight_0005,col sep=comma]{csv_data/aoi_vs_N.csv};


\addplot[line width=1pt, color=plotgreen, smooth]table[x=N,y=AoI_SAO_Medium_002,col sep=comma]{csv_data/aoi_vs_N.csv};
\addplot[line width=1pt, color=plotgreen, dashed, smooth]table[x=N,y=AoI_RR_Medium_002,col sep=comma]{csv_data/aoi_vs_N.csv};

\addplot[line width=1pt, color=plotgreen, dotted, smooth]table[x=N,y=AoI_Rnd_Medium_002,col sep=comma]{csv_data/aoi_vs_N.csv};

\addplot[line width=1pt, color=plotorange, smooth]table[x=N,y=AoI_SAO_Generous_005,col sep=comma]{csv_data/aoi_vs_N.csv};
\addplot[line width=1pt, color=plotorange, dashed, smooth]table[x=N,y=AoI_RR_Generous_005,col sep=comma]{csv_data/aoi_vs_N.csv};

\addplot[line width=1pt, color=plotorange, dotted, smooth]table[x=N,y=AoI_Rnd_Generous_005,col sep=comma]{csv_data/aoi_vs_N.csv};









\end{axis}

\end{tikzpicture}
    \caption{Average AoI over the number of sources ($N$) in the system for three different CF budgets ($.5,~2, \text{and}~5\mu$gCO$_{2}$eq) in medium CI region. }
	\label{fig:over_N_source}
	\vspace{-10pt}
\end{figure}

\begin{figure}
	\centering
	\begin{tikzpicture}
\definecolor{plotblue}{RGB}{0, 114, 178}
\definecolor{plotorange}{RGB}{230, 159, 0}

\definecolor{plotgreen}{RGB}{0, 158, 115}
\definecolor{plotpink}{RGB}{213, 94, 0}
\definecolor{plotpurple}{RGB}{204, 121, 167}
\definecolor{plotgray}{RGB}{153, 153, 153}

\begin{axis}[
	ymode=normal,
	xmode=log,
	ylabel= $\overline{\Delta}(h)$,
    xlabel=$\kappa(gCO_2eq)$,
	grid=both,
	major grid style={gray!35},
    ymin=0,
    ymax=140,
    ytick={0,24, 48,  72, 96, 120},
    yticklabels={0, 2, 4, 6, 8, 10},
    xmin=0.0002,
    xmax=0.055,
	legend style={draw=none,at={(0.5,1.2)},
    anchor=north,legend columns=3},
    ylabel style={
    rotate=-90,
    xshift=20pt,
    yshift=72pt},
]


\addlegendimage{no markers, black, line width=2pt}
\addlegendentry{SAOITHE}

\addlegendimage{no markers, black, dashed, line width=2pt}
\addlegendentry{Round Robin}

\addlegendimage{no markers, black, dotted, line width=2pt}
\addlegendentry{Random}

\addlegendimage{no markers, plotblue, line width=2pt}
\addlegendentry{$N=10$}

\addlegendimage{no markers, plotgreen, line width=2pt}
\addlegendentry{$N=50$}

\addlegendimage{no markers, plotorange, line width=2pt}
\addlegendentry{$N=100$}


\addplot[line width=1pt, color=plotblue,  smooth]table[x=kappa_SAOITHE_gCO2eq,y=AoI_SAOITHE_RR,col sep=comma]{csv_data/pareto_aoi_carbon_N10.csv};

\addplot[line width=1pt, color=plotblue, dashed, smooth]table[x=kappa_RR_gCO2eq,y=AoI_SAOITHE_RR,col sep=comma]{csv_data/pareto_aoi_carbon_N10.csv};

\addplot[line width=1pt, color=plotblue, dotted, smooth]table[x=kappa_RR_gCO2eq,y=AoI_Random,col sep=comma]{csv_data/pareto_aoi_carbon_N10.csv};

\addplot[line width=1pt, color=plotgreen,  smooth]table[x=kappa_SAOITHE_gCO2eq,y=AoI_SAOITHE_RR,col sep=comma]{csv_data/pareto_aoi_carbon_N50.csv};

\addplot[line width=1pt, color=plotgreen, dashed, smooth]table[x=kappa_RR_gCO2eq,y=AoI_SAOITHE_RR,col sep=comma]{csv_data/pareto_aoi_carbon_N50.csv};

\addplot[line width=1pt, color=plotgreen, dotted, smooth]table[x=kappa_RR_gCO2eq,y=AoI_Random,col sep=comma]{csv_data/pareto_aoi_carbon_N50.csv};

\addplot[line width=1pt, color=plotorange,  smooth]table[x=kappa_SAOITHE_gCO2eq,y=AoI_SAOITHE_RR,col sep=comma]{csv_data/pareto_aoi_carbon_N100.csv};

\addplot[line width=1pt, color=plotorange, dashed, smooth]table[x=kappa_RR_gCO2eq,y=AoI_SAOITHE_RR,col sep=comma]{csv_data/pareto_aoi_carbon_N100.csv};

\addplot[line width=1pt, color=plotorange, dotted, smooth]table[x=kappa_RR_gCO2eq,y=AoI_Random,col sep=comma]{csv_data/pareto_aoi_carbon_N100.csv};

\end{axis}

\end{tikzpicture}
    \caption{Average AoI as a function of the cumulative CF budget under different numbers of sources in medium CI region.}
	\label{fig:carbon_budget}
	\vspace{-10pt}
\end{figure}

\begin{figure}
	\centering
	\large\begin{tikzpicture}
\begin{axis}[
    axis lines=middle,
    width=\columnwidth,
    height=6.2cm,
    xmin=50,
    xmax=450,
    ymin=0,
    ymax=11.2,
    xlabel={$\xi(t)$ (gCO$_2$/kWh)},
    ylabel={$\Delta_{\mathrm{crit}} (min)$ },
    ytick={2,4,6,8,10},
    yticklabels={10,20,30,40,50},
    grid=both,
    major grid style={dashed, gray!60},
    minor grid style={dashed, gray!30},
    tick label style={font=\footnotesize},
    label style={font=\footnotesize},
    legend style={
        at={(0.98,0.15)},
        anchor=south east,
        draw=gray!70,
        fill=white,
        font=\footnotesize
    },
    legend cell align=left,
    axis on top
]

\addplot[
    black,
    thick
] table[
    col sep=comma,
    x index=0,
    y index=1
] {csv_data/whittle_decision_boundary.csv};
\addlegendentry{$W_n = 0$ boundary}

\addlegendimage{area legend, fill=red!20, draw=none}
\addlegendentry{Transmit Region ($W_n > 0$)}

\addlegendimage{area legend, fill=blue!25, draw=none}
\addlegendentry{Wait/Buffer Region ($W_n \leq 0$)}

\addplot[
    name path=boundary,
    draw=none
] table[
    col sep=comma,
    x index=0,
    y index=1
] {csv_data/whittle_decision_boundary.csv};

\path[name path=topline] (axis cs:50,11.2) -- (axis cs:450,11.2);
\path[name path=bottomline] (axis cs:50,0) -- (axis cs:450,0);

\addplot[
    blue!25,
    draw=none
] fill between[of=boundary and bottomline];

\addplot[
    red!20,
    draw=none
] fill between[of=boundary and topline];



\end{axis}
\end{tikzpicture}
    \caption{Whittle decision boundary of SAOITHE as a function of carbon intensity $\xi(t) \in [50, 450]$~gCO$_2$/kWh, with $E_{tot} = 0.9251$~J. The boundary $\Delta_{crit}(\xi)$ separates the transmit region ($W_n > 0$, in red) from the wait region ($W_n \leq 0$, in blue).}
	\label{fig:whittle_index}
	\vspace{-10pt}
\end{figure}

Fig.~\ref{fig:CI_over_regions_results} compares the average \ac{aoi} achieved by the three scheduling policies across low-, medium-, and high-\ac{ci} regions under the same budget $\kappa=21.5\mu$gCO$_{2}$eq. In the low-\ac{ci} region, all policies achieve relatively small \ac{aoi} values, with \ac{saoithe} attaining an \ac{aoi} of less than $20$ minutes, Round Robin around $25$ minutes, and Random $45$ minutes. In the medium-\ac{ci} region, the gap widens, where \ac{saoithe} achieves an \ac{aoi} of $43$ minutes, while Round Robin and Random reach $50$ and $100$ minutes, respectively. The largest difference appears in the high-\ac{ci} region, where \ac{saoithe} maintains an average \ac{aoi} of one and a half hours, whereas Round Robin increases to $2$ hours $45$ minutes and Random to over five and a half hours. These results show that the performance gap between the proposed \ac{saoithe} and the baseline policies becomes larger as the average \ac{ci} increases. In particular, while all approaches experience higher \ac{aoi} in more carbon-intensive regions, \ac{saoithe} degrades much more gracefully, since it explicitly accounts for the carbon cost of transmissions when allocating the available budget. In contrast, Round Robin, and especially Random, fail to adapt their transmission decisions to the carbon conditions, which leads to inefficient budget usage and substantially higher \ac{aoi}. Overall, the results confirm that the advantage of employing \ac{saoithe} is lower in low- and medium-\ac{ci} regions, where the gains are around $25\%$ and $20\%$, respectively, but becomes more advantageous in high-\ac{ci} settings, where such gains can be almost doubled, up to $75\%$, compared with Round Robin.


\subsection{Scalability and Whittle Boundary Analysis}

To demonstrate the scalability of the proposed solution, Fig.~\ref{fig:over_N_source} evaluates the proposed approach and two baselines under three different budgets, namely $\kappa=0.5~\mu\mathrm{gCO_2eq}$, $\kappa=2~\mu \mathrm{gCO_2eq}$, and $\kappa=5~\mu \mathrm{gCO_2eq}$, respectively. As the budget increases, all solutions approach the physical limit imposed by the employed \ac{lorawan} technology. However, regardless of the budget, the proposed \ac{saoithe} achieves lower \ac{aoi} in comparison. Furthermore, as the number of sources increases, the difference between the approaches becomes larger, as the \ac{aoi} also increases. The result is most notable when the budget is high, as the proposed solution performs near the hardware limitation of the system, while the performance of Round Robin deteriorates with the increasing number of sources. Consequently, the \ac{saoithe} solution is scalable, since the total number of transmissions per horizon is capped by the gateway demodulation capacity ($M=8$), independently of $N$. Additionally, this reveals a fundamental scalability paradigm: once the gateway capacity is saturated, the network's maximum \ac{cf} reaches a physical plateau, meaning that a higher budget will not result in lower \ac{aoi}.





Fig.~\ref{fig:carbon_budget} illustrates the trade-off between information freshness and the available \ac{cf} budget $\kappa$ for systems with different numbers of information sources, i.e., $N=10$, $N=50$, and $N=100$. As the number of sources increases, a larger budget is required to achieve the same \ac{aoi}, which is expected due to the increased contention among sources. Across all considered system sizes, \ac{saoithe} consistently achieves the lowest \ac{aoi} for a given budget, thereby providing the most efficient freshness-carbon trade-off. In contrast, Round Robin performs worse, while Random exhibits the weakest performance and requires a substantially larger \ac{cf} budget to achieve the same \ac{aoi} levels. The results further show that the overall \ac{aoi} increases with the number of sources, highlighting the greater scheduling complexity in larger systems, while preserving the relative advantage of \ac{saoithe}. Moreover, as the available budget increases, the gap in \ac{aoi} among the considered approaches decreases, as expected, since the system becomes less constrained under a larger budget.

 In Fig.~\ref{fig:whittle_index}, we illustrate the theoretical decision boundary of the \ac{saoithe} solution by tracing $\Delta_{\mathrm{crit}}(\xi)$ (see Corollary~\ref{cor:stability}) across the \ac{ci} range $\xi \in [50, 450]$. The boundary partitions the state space into a Transmit Region ($W_{n} > 0$, $\Delta > \Delta_{\mathrm{crit}}$), where information staleness outweighs the carbon cost of transmission, and a Wait/Buffer Region ($W_n \leq 0$), where the scheduler defers transmissions to preserve the carbon budget. The cube-root non-linearity $\Delta_{\mathrm{crit}} \propto \xi(t)^{1/3}$ confirms that the policy becomes increasingly conservative as the grid becomes more carbon-intensive. More specifically, the critical \ac{aoi} $\Delta_{\mathrm{crit}}$ rises from approximately $24$ minutes at $\xi = 50~\mathrm{gCO_{2}/kWh}$ to over $50$ minutes at $\xi = 450~\mathrm{gCO_{2}/kWh}$. The boundary is governed by $\lambda^{*} E_{\mathrm{tot}}$, where $E_{\mathrm{tot}} = 0.9251$~J and $\lambda^{*}$ is the Lagrange multiplier that emerges at the convergence of the SAOITHE algorithm, representing the shadow price of the carbon budget $\kappa$. Notably, the boundary is defined per device. In other words, the Whittle decomposition reduces the $N$-device scheduling problem to $N$ independent single-device subproblems. As a result, $\Delta_{\mathrm{crit}}(\xi)$ applies uniformly to all devices regardless of network size. Furthermore, the curve implicitly encodes a specific $\kappa$: a tighter carbon budget drives $\lambda^{*}$ higher, shifting $\Delta_{\mathrm{crit}}$ upward and forcing the scheduler to defer transmissions more aggressively, while a relaxed budget lowers $\lambda^{*}$ and compresses the curve toward smaller critical ages.


Overall, our results show a fundamental trade-off between information freshness and the resulting \ac{cf} of the system. While all policies achieve lower \ac{aoi} as the available \ac{cf} budget increases, the proposed \ac{saoithe} consistently provides the best freshness-carbon trade-off by adapting transmission decisions to the instantaneous \ac{ci}. In particular, it transmits earlier when the \ac{ci} is lower and waits with updates when the \ac{ci} is higher, thereby preserving the available budget more efficiently. The results further show that this advantage is maintained as the number of information sources in the system increases, confirming the scalability and practical viability of the proposed approach.

\section{Conclusion}
\label{sec:conclusion}

In this paper, we proposed the \ac{saoithe} framework, which minimizes a quadratic \ac{aoi} penalty subject to an allocated \ac{cf} budget, transmission duty usage, and channel capacity constraints. We showed that the proposed solution based on Whittle-index approach is scalable and outperforms Round Robin baseline solution. Using real-world \ac{ci} traces, our results show a fundamental trade-off between information freshness and the resulting \ac{cf} of the system. Additionally, we demonstrate that \ac{saoithe} consistently provides the best \ac{aoi}-\ac{cf} trade-off while remaining scalable and practically viable. Furthermore, the results confirm that the advantage of employing \ac{saoithe} is smaller in low- and medium-\ac{ci} regions, where the gains are around $25\%$ and $20\%$, respectively, but becomes more pronounced in high-\ac{ci} settings, where such gains can be almost doubled, reaching up to $75\%$ compared with the Round Robin baseline.

In future, we will extend our work by considering systems in which information sources have different priorities. For example, weighted \ac{aoi} formulations could be used to prioritize critical applications, which is important because not all status updates have the same relevance for the underlying system or service. We will also consider heterogeneity in \ac{ci}, as different information sources may rely on different energy sources. For example, when two sources transmit at the same time, their updates may result in different \acp{cf} due to differences in the energy source used for information generation, transmission, or processing. Therefore, the scheduling mechanism would need to account for these differences when making decisions. We also plan to consider the integration of 
\ac{swipt}, which would further affect how the system accounts for the \ac{cf} of a status update and would result in additional trade-offs.

\bibliographystyle{./templates/IEEEtran}
\bibliography{bibliography}

\section*{Appendices}
\label{sec:appendix}
\appendices
\section{Derivation of the Closed-Form Whittle Index}
\label{app:whittle_proof}

In real-time energy markets, the \ac{ci} ($\xi(t)$) typically updates in discrete intervals, such as every $5$ minutes (corresponding to our decision slot $\tau_{slot} = 300$ s). In contrast, the actual physical communication taking \ac{lorawan} packet transmission as an example, it takes only a few seconds. Because the transmission time is orders of magnitude smaller than the \ac{ci} update interval, the environmental cost $C_n(t)$ remains strictly constant during the execution of any scheduling decision. Therefore, the dynamic cost can be accurately treated as locally stationary at each decision epoch. By assuming a locally stationary cost $C_n(t)$ for the duration of the decision interval, we can consider the decoupled single-source problem with the relaxed Lagrangian cost function $\mathcal{L}_{n}(\Delta_{n},a,t)=\Delta_{n}^{2} + aC_{n}(t)-(1-a)\nu$. Here, $\nu$ acts as a subsidy for passivity, making the effective transmission cost $C_{n}(t) + \nu$. Since the \ac{aoi} penalty ($\Delta^2$) is convex, the optimal strategy is a threshold policy: transmit if the current age $\Delta \ge H$. Under a threshold $H$, the system evolves in renewal cycles of length $H$. The cumulative staleness cost over one cycle is $J_{\Delta}(H) = \sum_{h=1}^{H} h^{2} = \frac{H(H+1)(2H+1)}{6}$. The long-term average cost $\Phi(H)$ is obtained by amortizing the cycle cost (cumulative staleness plus one effective transmission) over the cycle length:
\begin{equation}
    \Phi(H) = \frac{1}{H} \left[ \frac{H(H+1)(2H+1)}{6} + C_{n}(t) + \nu \right].
    \label{eq:avg_lagrangian}
\end{equation}

The Whittle Index $W_n(\Delta)$ is defined as the critical subsidy $\nu$ that makes the system indifferent between transmitting at age $\Delta$ (policy $H=\Delta$) and waiting one more slot (policy $H=\Delta+1$). Setting $\Phi(\Delta) = \Phi(\Delta+1)$ implies:
\begin{equation}
    \frac{J_{\Delta}(\Delta) + C_n(t) + \nu}{\Delta} = \frac{J_{\Delta}(\Delta+1) + C_n(t) + \nu}{\Delta+1}.
\end{equation}
Solving for $\nu$ isolates the index:
\begin{equation}
    \nu = \Delta J_{\Delta}(\Delta+1) - (\Delta+1) J_{\Delta}(\Delta) - C_n(t).
\end{equation}
Substituting the cubic form of $J_{\Delta}(\cdot)$ yields the closed-form Whittle Index:
\begin{equation}
    W_{n}(\Delta, t) = \frac{4\Delta^3 + 9\Delta^2 + 5\Delta}{6} - C_n(t),
\end{equation}
where $C_n(t) = \lambda \xi(t) E_{\mathrm{tot},n} + \mu C_{\mathrm{duty}}$.
\hfill \IEEEQED

\section{Complexity Analysis of \ac{saoithe}-Whittle}
\label{app:complexity}

In this appendix, we analyze the asymptotic computational complexity of the proposed scheduling algorithm. Let $\mathcal{N} = \{1, \dots, N\}$ denote the set of devices managed by the gateway. The scheduling decision at each time slot $t$ consists of two sequential phases: Index Computation and Prioritization.

The gateway computes the Whittle Index $W_n(\Delta, t)$ for all $n \in \mathcal{N}$ using the closed-form expression derived in Eq.~\eqref{eq:whittle_index}. Let $\mathcal{T}_{\mathrm{op}}$ denote the computational cost of a basic floating-point operation (addition, multiplication). The calculation of a cubic polynomial $W_n \approx c_3\Delta^3 + c_2\Delta^2 + c_1\Delta + c_0$ requires a fixed constant number of operations, $k$, regardless of the network size $N$.
Thus, the total cost for index computation, $T_{\text{calc}}$, scales linearly with $N$:
\begin{equation}
    T_{\text{calc}}(N) = \sum_{n=1}^{N} k \cdot \mathcal{T}_{\mathrm{op}} = O(N).
\end{equation}

\noindent Next, the scheduler must select the subset of devices $\mathcal{A}^* \subset \mathcal{N}$ with cardinality $|\mathcal{A}^*| \leq M$ that maximizes the sum of indices. This is equivalent to sorting the set $\{W_n\}_{n=1}^N$ in descending order.
According to the information-theoretic lower bound for comparison-based sorting algorithms~\cite{cormen01introduction}, the worst-case time complexity $T_{\text{sort}}$ is:
\begin{equation}
    T_{\text{sort}}(N) = O(N \cdot \log N).
\end{equation}

\noindent As a result, the total time complexity per control slot is the sum of the two phases:
\begin{equation}
    T_{\text{total}}(N) = T_{\text{calc}}(N) + T_{\text{sort}}(N) = O(N) + O(N \cdot\log N).
\end{equation}

\noindent Since $N \log N \gg N$ when $N\rightarrow\infty $, the asymptotic complexity is:

\begin{equation}
    T_{\text{total}}(N) \approx  O(N \cdot\log N).
\end{equation}

\noindent This confirms that the algorithm scales quasilinearly, making it computationally feasible for resource-constrained gateways even as $N \to \infty$.

\end{document}